\documentclass[10pt,journal,letterpaper]{IEEEtran}

\usepackage{psfrag}
\usepackage{epsfig}
\usepackage{graphicx}
\usepackage{multicol}
\usepackage{cite} 
\RequirePackage{amsfonts}
\RequirePackage{amssymb}
\RequirePackage{amsopn}
\RequirePackage{amsmath}
\RequirePackage{pifont}
\RequirePackage{mathrsfs}
\RequirePackage{pstricks}


%
%
%
%

\interdisplaylinepenalty=1000

%
%

\def\atauxout{\csname @auxout\endcsname}%
\def\labelii#1{\immediate\write\atauxout{%
    \noexpand\newlabel{#1}{{\theenumii}{\thepage}}}}





\newcommand{\vect}[1]{\mathbf{#1}} 
 








\newcommand{\set}[1]{\mathcal{#1}}   



\newcommand{\Hb}{H_{\textnormal{b}}}


 



\newcommand{\indep}{{\;\bot\!\!\!\!\!\!\bot\;\,}}  
\newcommand{\markov}{\textnormal{\mbox{$\multimap\hspace{-0.73ex}-\hspace{-2ex}-$}}}







%
\newbox\measurebox %
\newlength{\firstmini} %
\newcommand{\mytextandeps}[2]{%
  \setbox\measurebox\hbox{\epsfig{file=#2}}
  \setlength{\firstmini}{\linewidth}%
  \addtolength{\firstmini}{-\wd\measurebox}%
  \addtolength{\firstmini}{-1em}%
  \begin{minipage}[t]{\firstmini}
    #1
  \end{minipage} \hfill %
  \setbox\measurebox\vbox{\unhbox\measurebox} %
  \setlength{\firstmini}{\ht\measurebox} %
  \addtolength{\firstmini}{\dp\measurebox} %
  \ht\measurebox=0pt \dp\measurebox=\firstmini %
  \box\measurebox
}%
  
\newgray{mygray}{0.9}
\makeatletter
  {\egroup     
  \noindent
  \centerline{
  \psshadowbox[shadowsize=0.3em,framesep=1.5em, fillstyle=solid, fillcolor=mygray]
              {\box\@tempboxa}
              }
  \par
  \vskip 0pt plus1ex 
  }%
\makeatother

\makeatletter
  {\egroup     
  \noindent
  \centerline{
  \psshadowbox[shadowsize=0.3em,framesep=1.5em, fillstyle=solid, fillcolor=white]
              {\box\@tempboxa}
              }
  \par
  \vskip 0pt plus1ex 
  }%

\newtheorem{theorem}{Theorem}
\newtheorem{lemma}{Lemma}

\newtheorem{proposition}{Proposition}

\title{The State-Dependent Semideterministic \\Broadcast Channel}

\author{Amos~Lapidoth,~\IEEEmembership{Fellow,~IEEE}~and~Ligong~Wang,~\IEEEmembership{Member,~IEEE}
\thanks{The material in this paper was presented in part at IEEE 2012
  International Symposium on Information Theory, Cambridge, MA, USA,
  1--6 Jul. 2012. L.W. is supported by the US Air Force Office of 
Scientific Research under Grant No.  FA9550-11-1-0183, and by the
National Science Foundation under Grant No. CCF-1017772.}
\thanks{A. L. is with ETH Zurich, ETF E107, Sternwartstrasse 7,
    Z\"urich 
    8092, Switzerland. E-mail:
    \texttt{lapidoth@isi.ee.ethz.ch}.}
\thanks{L. W. is with the Massachusetts Institute of Technology, 77
  Massachusetts Avenue, 36-687, Cambridge, MA 02139,  
    USA. E-mail: \texttt{wlg@mit.edu}.}
}  

\date{}

\begin{document}

\maketitle

\begin{abstract}
  We derive the capacity region of the state-dependent
  semideterministic broadcast channel with noncausal state-information
  at the transmitter. One of the two outputs of this channel is
  a deterministic function of the channel input and the channel state,
  and the state is assumed to be known noncausally to the transmitter
  but not to the receivers. We show that appending the state to the deterministic
  output does not increase capacity. 

  We also derive an outer bound on the capacity of general (not
  necessarily semideterministic) state-dependent broadcast channels.
\end{abstract}

\begin{IEEEkeywords}
Broadcast channel, capacity region, channel-state information,
Gel'fand-Pinsker problem, semideterministic.
\end{IEEEkeywords}

\section{Introduction}

\IEEEPARstart{W}{e} characterize the capacity region of the discrete,
memoryless, 
state-dependent, semideterministic broadcast channel. This channel has
a single transmitting node, two receiving nodes, and an internal
state, all of which are assumed to take value in finite sets. One of
the receiving nodes---the ``deterministic receiver''---observes a
symbol~$Y$ that is a deterministic function of the transmitted symbol
$x$ and the (random) state~$S$
\begin{subequations}\label{eq:channel}
\begin{equation}
  Y = f(x,S)\quad\textnormal{with probability one},\label{eq:channel1}
\end{equation}
and the other receiving node---the ``nondeterministic
receiver''---observes a symbol $Z$, which is random: conditional on
the input being~$x$ and the state being~$s$, the probability that it
equals $z$ is $W(z|x,s)$:
\begin{equation}
  \Pr[Z=z|X=x, S=s] = W(z|x,s). \label{eq:channel2}
\end{equation}
The state sequence $\vect{S}$ is assumed to be independent and
identically distributed (IID) according to some law $P_{S}(\cdot)$
\begin{equation}
  \label{eq:channel3}
  \Pr[S=s] = P_{S}(s)
\end{equation}
\end{subequations}
and to be revealed to the encoder in a noncausal way: all future
values of the state are revealed to the transmitter before
transmission begins.

We consider a scenario where the encoder wishes to convey \emph{two
  private messages}: $M_y \in \{1,\ldots,2^{nR_y}\}$ to the
deterministic receiver, and $M_z \in \{1,\ldots,2^{nR_z}\}$ to the
nondeterministic receiver, where $R_y$ and $R_z$ denote the
\emph{rates} (in bits per channel use) of data transmission to the
deterministic and nondeterministic receivers.\footnote{To be precise,
  we should replace $2^{nR_{y}}$ and $2^{n R_{z}}$ with their integer
  parts, but, for typographical reasons, we shall not.} The
messages $M_y$ and $M_z$ are assumed to be independent and uniformly
distributed. As for the broadcast channel without a state
\cite{coverthomas91,elgamalkim11}, we define the \emph{capacity
  region} of this channel as the closure of all rate-pairs that are
achievable in the sense that the probability that at least one of the
receivers decodes its message incorrectly can be made arbitrarily
close to zero.

The main result of this paper is a single-letter characterization of
the capacity region:
\begin{theorem}\label{thm:main}
  The capacity region of the channel \eqref{eq:channel} when the
  states are known noncausally to the transmitter is the convex
  closure of the union of rate-pairs $(R_y,R_z)$ satisfying
  \begin{subequations}\label{eq:main}
  \begin{IEEEeqnarray}{rCl}
    R_y & < & H(Y|S)\\
    R_z & < & I(U;Z)-I(U;S)\\
    R_y+R_z & < & H(Y|S)+I(U;Z) - I(U;S,Y)
  \end{IEEEeqnarray}
  \end{subequations}
  over all joint distribution on $(X,Y,Z,S,U)$ whose marginal on $S$
  is the 
  given state distribution $P_{S}$ and under which, conditional on $X$
  and $S$, the channel outputs $Y$ and $Z$ are drawn according to the
  channel law \eqref{eq:channel} independently of $U$:
  \begin{IEEEeqnarray}{rCl}
    \lefteqn{P_{XYZSU}(x,y, z,s,u)}~~~~~~~\nonumber \\ 
    & = & P_{S}(s) \, P_{XU|S}(x,u|s) \, \mathbf{1}\bigl\{y =
      f(x,s)\bigr\} \, W(z|x,s).\label{eq:distribution}
      \IEEEeqnarraynumspace 
  \end{IEEEeqnarray}
  Here $\mathbf{1}\{\cdot\}$ denotes the indicator
  function.\footnote{The value of $\mathbf{1}\{\text{statement}\}$ is
    $1$ if the statement is true and is $0$ otherwise.}  Moreover,
  this is also the capacity region when the state sequence is also
  revealed to the deterministic receiver, i.e., when the mapping
  $f(\cdot,\cdot)$ is replaced by the mapping $(x,s) \mapsto \bigl(
  f(x,s), s \bigr)$.
\end{theorem}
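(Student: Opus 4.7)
The plan is to prove achievability of the region \eqref{eq:main} for the original channel \eqref{eq:channel} and a matching converse for the \emph{augmented} channel in which the deterministic receiver also observes $\vect{S}$. Since providing a receiver with extra information cannot shrink capacity, the original capacity $\mathcal{C}$ is contained in the augmented capacity $\mathcal{C}'$; combining achievability and converse yields $\mathcal{R} \subseteq \mathcal{C} \subseteq \mathcal{C}' \subseteq \mathcal{R}$, where $\mathcal{R}$ is the region in \eqref{eq:main}, establishing both assertions at once.

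For achievability, I would combine Marton coding with Gel'fand--Pinsker coding. Fix a joint distribution of the form \eqref{eq:distribution}. Construct a $U$-codebook of $2^{n(R_z + I(U;S) + \delta)}$ IID sequences from $P_U$, grouped into $2^{nR_z}$ bins indexed by $m_z$, and a $Y$-codebook of $2^{n(R_y + I(Y;S) + \delta)}$ IID sequences from $P_Y$, grouped into $2^{nR_y}$ bins indexed by $m_y$. For each message pair $(m_y, m_z)$ and state realization $\vect{s}$, the mutual covering lemma yields $\vect{u}$ and $\vect{y}$ in the appropriate bins with $(\vect{u}, \vect{y}, \vect{s})$ jointly typical under $P_{USY}$; a symbol-by-symbol choice of $\vect{x}$ then produces a channel input consistent with $y_i = f(x_i, s_i)$. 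Decoding at the deterministic receiver by matching the observed $\vect{y}$ to the codebook succeeds when $R_y < H(Y|S)$; decoding at the nondeterministic receiver by joint typicality with $\vect{z}$ succeeds when $R_z < I(U;Z) - I(U;S)$; the mutual-covering condition supplies the sum-rate bound.

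For the converse, I would apply Fano's inequality to the augmented channel to obtain $nR_y \leq I(M_y; \vect{Y}, \vect{S}) + n\epsilon_n$ and $nR_z \leq I(M_z; \vect{Z}) + n\epsilon_n$. The bound $R_y \leq H(Y|S)$ falls out of $I(M_y; \vect{Y}, \vect{S}) = H(\vect{Y}|\vect{S}) - H(\vect{Y}|M_y, \vect{S}) \leq \sum_i H(Y_i|S_i)$, using $\vect{S} \indep M_y$. The Gel'fand--Pinsker bound $R_z \leq I(U;Z) - I(U;S)$ follows by setting $U_i \eqdef (M_z, Z^{i-1}, S_{i+1}^n)$ and applying the Csisz\'ar sum identity. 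For the sum rate, the crucial identity is
\[
  H(\vect{Y} | M_y, \vect{S}) = I(M_z; \vect{Y}, \vect{S}, M_y),
\]
which holds because $\vect{Y}$ is a deterministic function of $(M_y, M_z, \vect{S})$ and $M_z \indep (M_y, \vect{S})$. Substituting this reduces the sum-rate bound to single-letterizing $I(M_z; \vect{Z}) - I(M_z; \vect{Y}, \vect{S}, M_y)$, which a further application of the Csisz\'ar sum identity converts into $\sum_i [I(U_i; Z_i) - I(U_i; Y_i, S_i)]$ for an enlarged auxiliary $U_i$ including $M_y$ and $Y_{i+1}^n$; a uniform time-sharing variable absorbed into $U$ then produces the single-letter form, and the Markov chain $U \markov (X, S) \markov (Y, Z)$ required by \eqref{eq:distribution} is automatic by memorylessness.

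The main obstacle I anticipate is the sum-rate converse: the auxiliary $U$ must \emph{simultaneously} produce the clean Gel'fand--Pinsker expression $I(U;Z) - I(U;S)$ for $R_z$ and the Marton-like expression $I(U;Z) - I(U;S,Y)$ for the sum, so the Csisz\'ar-sum manipulations have to cancel every extraneous cross term without residue. The remaining ingredients---joint-typicality analysis for the direct part and standard time-sharing together with the usual convexification argument in the converse---are routine.
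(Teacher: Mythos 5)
Your achievability sketch and the single-rate bounds in the converse match the paper's, but there are two genuine gaps in the sum-rate converse, and the second one is the crux of the whole theorem.

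\textbf{First gap: the sum-rate decomposition.} You bound $n(R_y+R_z)$ by adding the two Fano inequalities $nR_y \le I(M_y;\vect{Y},\vect{S})$ and $nR_z\le I(M_z;\vect{Z})$, and then use the (correct) identity $H(\vect{Y}|M_y,\vect{S})=I(M_z;\vect{Y},\vect{S},M_y)=I(M_z;\vect{Y},\vect{S}\,|\,M_y)$ to land on $H(\vect{Y}|\vect{S})+I(M_z;\vect{Z})-I(M_z;\vect{Y},\vect{S}\,|\,M_y)$. The trouble is that $I(M_z;\vect{Z})$ expands over $Z^{i-1}$ without any $M_y$ in the conditioning, whereas $I(M_z;\vect{Y},\vect{S}\,|\,M_y)$ expands over $(Y_{i+1}^n,S_{i+1}^n)$ \emph{with} $M_y$ in the conditioning, so the Csisz\'ar-sum residues do not cancel: you are forced to drag $M_y$ into the per-letter auxiliary, at which point the identification $I(U_i;Z_i)-I(U_i;S_i,Y_i)$ picks up an extra unconditional term $I(M_y,S_{i+1}^n,Y_{i+1}^n;S_i,Y_i)$ that does not vanish. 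The paper avoids this by starting from $H(M_y,M_z)=H(M_z)+H(M_y\,|\,M_z)$, which yields $I(M_z;\vect{Z})+I(M_y;\vect{Y},\vect{S}\,|\,M_z)$: now the conditioning is on $M_z$ in \emph{both} places, and after Csisz\'ar's identity the leftover $\sum_i I(M_z,S_{i+1}^n,Y_{i+1}^n;S_i,Y_i)$ merges with $I(M_y;\vect{Y},\vect{S}\,|\,M_z)$ into exactly $\sum_i H(Y_i|S_i)$, with $M_y$ absorbed there and never entering the auxiliary.

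\textbf{Second gap: the two auxiliaries.} Even if the algebra closed, you would end up---as the paper does---with \emph{two} distinct auxiliaries: $V_i=(M_z,Z^{i-1},S_{i+1}^n)$ in the $R_z$ bound and the larger $(V_i,T_i)=(M_z,Z^{i-1},S_{i+1}^n,Y_{i+1}^n)$ in the sum-rate bound, i.e., a region of the form $R_z< I(V;Z)-I(V;S)$ and $R_y+R_z< H(Y|S)+I(V,T;Z)-I(V,T;S,Y)$. You correctly flag that collapsing these into a single $U$ is the main obstacle, but your proposed remedy (``the Csisz\'ar-sum manipulations have to cancel every extraneous cross term without residue'', plus a time-sharing variable absorbed into $U$) is not what makes it work, and in fact there is no cancellation to be had. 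The paper's key move is a \emph{case-dependent choice of $U$}: set $U=V$ when $I(T;Z\,|\,V)\le I(T;S,Y\,|\,V)$ (so the sum-rate bound only improves), and set $U=(V,T)$ otherwise (so the $R_z$ bound only improves), and observe that at least one of the two cases must hold because $I(T;Z\,|\,V)-I(T;S,Y\,|\,V)\le I(T;Z\,|\,V)-I(T;S\,|\,V)$ always, the gap being $I(T;Y\,|\,S,V)\ge 0$. This ``the auxiliary depends on the codebook'' step is the novel ingredient the paper itself emphasizes, and it is missing from your proposal.
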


\begin{IEEEproof}
  See Sections~\ref{sec:direct} and~\ref{sec:converse}.
\end{IEEEproof}

As to the cardinality of the auxiliary random variable $U$:
\begin{proposition}\label{prp:cardinality}
  To exhaust the capacity region of the channel~\eqref{eq:channel}, 
  we may restrict the auxiliary random variable $U$ in
  \eqref{eq:main} to take value in a set $\set{U}$ whose cardinality
  $|\set{U}|$ is bounded by 
  \begin{equation}
    |\set{U}| \le |\set{X}|\cdot|\set{S}|+1,
  \end{equation}
  where $\set{X}$ and $\set{S}$ denote the input and state alphabets.
\end{proposition}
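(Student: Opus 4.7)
The plan is to invoke the Fenchel--Eggleston--Carath\'eodory strengthening of Carath\'eodory's theorem (the standard ``support lemma'' of multi-user information theory), applied to the conditional distribution $P_{XS\mid U}$ as the object to be pruned. The first step is to isolate the $U$-dependent pieces of the right-hand sides of~\eqref{eq:main}: once the joint marginal $P_{XS}$ is prescribed, the quantities $H(Y|S)$, $H(S)$, $H(S,Y)$, and $H(Z)$ are all determined, since $P_{YS}$ is obtained from $P_{XS}$ via~$f$ and $P_Z$ from $P_{XS}$ via~$W$. Consequently, the only $U$-dependent contributions in~\eqref{eq:main} are $H(Z\mid U)-H(S\mid U)$ (which, together with the fixed $H(Z)-H(S)$, recovers $I(U;Z)-I(U;S)$) and $H(Z\mid U)-H(S,Y\mid U)$ (which recovers $I(U;Z)-I(U;S,Y)$).

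Next, viewing each value $u$ of the original $U$ as selecting a conditional distribution $P_{XS\mid U=u}$ on $\set{X}\times\set{S}$, I would designate the following continuous functions on the simplex whose expectations over $U$ must be preserved: the $|\set{X}|\cdot|\set{S}|-1$ coordinate functions $P_{XS\mid U=u}(x,s)$ (one less than $|\set{X}|\cdot|\set{S}|$ by normalization), together with the two scalar functionals $H(Z\mid U=u)-H(S\mid U=u)$ and $H(Z\mid U=u)-H(S,Y\mid U=u)$. These are all continuous functions of $P_{XS\mid U=u}$ because $P_{S\mid U=u}$, $P_{YS\mid U=u}$, and $P_{Z\mid U=u}$ are linear images of $P_{XS\mid U=u}$ under $f$ or~$W$. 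The total count is $|\set{X}|\cdot|\set{S}|+1$.

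Applying the support lemma then yields a random variable $\tilde U$ on an alphabet of size at most $|\set{X}|\cdot|\set{S}|+1$ and a conditional law $P_{XS\mid\tilde U}$ reproducing every one of these expectations. Reassembling the joint distribution in the factored form $P_{\tilde U}(\tilde u)\,P_{XS\mid\tilde U}(x,s\mid\tilde u)\,\mathbf{1}\{y=f(x,s)\}\,W(z\mid x,s)$ gives a law satisfying~\eqref{eq:distribution} under which the marginal $P_{XS}$ (hence $P_S$, $H(Y|S)$, $H(Z)$, $H(S)$, $H(S,Y)$) and both mutual-information differences $I(\tilde U;Z)-I(\tilde U;S)$, $I(\tilde U;Z)-I(\tilde U;S,Y)$ match their original values, so every constraint in~\eqref{eq:main} is preserved at the same rate pair. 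The only delicate modeling choice---and the one place where inattention would cost a slot in the bound---is packaging the $U$-dependent conditional entropies as the two \emph{differences} above rather than preserving $H(Z|U)$, $H(S|U)$, and $H(S,Y|U)$ separately, which would give only the weaker bound $|\set{X}|\cdot|\set{S}|+2$.
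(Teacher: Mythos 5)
Your proposal is correct and matches the paper's proof in all essential respects: you preserve the $P_{XS}$-marginal through $|\set{X}|\cdot|\set{S}|-1$ coordinate functionals plus the two packaged entropy differences $H(Z|U=u)-H(S|U=u)$ and $H(Z|U=u)-H(S,Y|U=u)$ (the paper uses their negatives, $h_0$ and $h_1$), and then invoke the Support Lemma. The observation that bundling the $U$-dependent conditional entropies as two differences rather than three separate terms is what saves the extra slot is exactly the point implicit in the paper's choice of $h_0,h_1$.
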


\begin{IEEEproof}
  See Appendix~\ref{app:cardinality}.
\end{IEEEproof}

Broadcast channels \emph{without states} have been studied extensively
\cite{cover98}. Our work can be considered as an extension to
broadcast channels with states of prior work by Gel'fand, Marton, and
Pinsker on deterministic and semideterministic broadcast channels
without states
\cite{gelfand77,marton77,pinsker78,marton79,gelfandpinsker80,elgamalkim11}.
State-dependent broadcast channels were also considered before
\cite{steinberg05,steinbergshamai05,khosravimarvasti11}, but capacity
regions of most such channels are still unknown.

Steinberg \cite{steinberg05} 
studied the \emph{degraded} state-dependent broadcast channel with
causal and with noncausal state-information at the transmitter. He
derived the capacity region for the causal case, but for the noncausal
case his outer and inner bounds 
do not coincide. Steinberg and Shamai~\cite{steinbergshamai05} then
derived an inner bound for general (not necessarily degraded)
state-dependent broadcast channels with noncausal
state-information. This inner bound is based on Marton's inner bound
for broadcast channels without states \cite{marton79} and on
Gel'fand-Pinsker coding \cite{gelfandpinsker80_3}. In fact, the direct
part of our Theorem~\ref{thm:main} can be deduced from
\cite{steinbergshamai05} with a proper choice of the auxiliary random
variables (see Section \ref{sub:direct1}). 

Our proof of the converse part of Theorem~\ref{thm:main} borrows from
the Gel'fand-Pinsker converse for single-user channels with states
\cite{gelfandpinsker80_3} as well as from the K\"orner-Marton
\cite{marton79} and the Nair-El~Gamal \cite{nairelgamal07} approaches
to outer-bounding the capacity region of broadcast channels without
states. But it also has a new element: \emph{the choice/definition of
  the auxiliary random variable depends on the codebook.}
As we demonstrate in Section~\ref{sec:general}, our proof can be
extended to general (not necessarily semideterministic)
state-dependent broadcast channels.

Some special cases of Theorem~\ref{thm:main} were solved by 
Khosravi-Farsani and Marvasti \cite{khosravimarvasti11}: the
\emph{fully} deterministic case, the case where the states are known
to the nondeterministic receiver, and the case where the
channel is degraded so $(X,S)\markov Y \markov Z$ forms a Markov
chain. 

The rest of this paper is organized as follows. We prove the direct and
converse parts of Theorem~\ref{thm:main} in Sections~\ref{sec:direct}
and~\ref{sec:converse}. In Section~\ref{sec:example} we apply
Theorem~\ref{thm:main} to a specific channel whose nondeterministic
output is unaffected by the state. Even so, noncausal
state-information is strictly better than causal. We finally derive a
new 
outer bound on general state-dependent broadcast channels in
Section~\ref{sec:general}.

\section{Direct Part}\label{sec:direct}

In this section we prove the direct part of
Theorem~\ref{thm:main}. One way to do this is to use
\cite[Theorem~1]{steinbergshamai05} with the choice of the auxiliary
random variables that we propose in Section~\ref{sub:direct1}\@. For
completeness and simplicity, we also provide a self-contained proof in
Section~\ref{sub:direct2}.

\subsection{Proof based on~\cite{steinbergshamai05}} 
\label{sub:direct1} 
It was shown in
\cite[Theorem 1]{steinbergshamai05} that the capacity region of a
general (not necessarily semideterministic) state-dependent broadcast
channel with noncausal state-information at the transmitter contains
the convex closure of the union of rate-pairs $(R_y,R_z)$ satisfying
\begin{subequations}\label{eq:steinbergshamai05}
\begin{IEEEeqnarray}{rCl}
  R_y & \le & I(U_0,U_y;Y)-I(U_0,U_y;S)\\
  R_z & \le & I(U_0,U_z;Z)-I(U_0,U_z;S)\\
  R_y+R_z & \le & - \bigl[ \max \{ I(U_0;Y),I(U_0;Z) \} - I(U_0;S)
  \bigr]^+ \nonumber\\  
  && {} + I(U_0,U_y;Y) -I(U_0,U_y;S) + I(U_0,U_z;Z)
  \nonumber\\   
  && {} - I(U_0,U_z;S) - I(U_y;U_z|U_0,S),
\end{IEEEeqnarray}
\end{subequations}
where the union is over all joint distribution
on $(X,Y,Z,S,U_0,U_y,U_z)$ whose marginal is $P_{S}$;
that satisfies the Markov condition
\begin{equation}
\label{eq:StShMar}
(U_0,U_y,U_z)\markov (X,S)\markov(Y,Z);
\end{equation}
and under which the conditional law of $(Y,Z)$ given $(X,S)$ is that of
the given channel. 

For the semideterministic channel, we choose the auxiliary random
variables in \eqref{eq:steinbergshamai05} as follows:
\begin{subequations}
\begin{IEEEeqnarray}{rCl}
  U_0 & = & 0\quad\textnormal{(deterministic)}\\
  U_y & = & Y\\
  U_z & = & U.
\end{IEEEeqnarray}
\end{subequations}
Note that the Markov condition~\eqref{eq:StShMar} is satisfied because
$Y$ is a deterministic function of $(X,S)$ and because in
Theorem~\ref{thm:main} we restrict $U$ to be such that $U \markov (X,S)
\markov (Y,Z)$. With this choice of
$U_{0}$, $U_{y}$, and $U_{z}$, \eqref{eq:steinbergshamai05} reduces to
\eqref{eq:main}.

\subsection{Self-contained proof} \label{sub:direct2}

We next provide a self-contained proof of the direct part of
Theorem~\ref{thm:main}. As in~\cite[Theorem 1]{steinbergshamai05}, our
proof is based on Marton's inner bound for general broadcast channels
\cite{marton79,elgamalvandermeulen81} and on Gel'fand-Pinsker coding
\cite{gelfandpinsker80_3}.

First note that the joint distribution \eqref{eq:distribution} can
also be written as 
\begin{IEEEeqnarray}{rCl}
  \lefteqn{P_{XYZSU}(x,y,z,s,u)}~~~\nonumber \\  
  & = & P_S(s) \, P_{YU|S}(y,u|s) \, P_{X|YSU}(x|y,s,u) \,
  W(z|x,s)\label{eq:distribution1} \IEEEeqnarraynumspace
\end{IEEEeqnarray}
with the additional requirement that
\begin{equation}\label{eq:requirey}
  y = f(x,s).
\end{equation}
Further note that, when $P_{YSU}$ is fixed, all the terms on the
right-hand side (RHS)
of \eqref{eq:main} are fixed except for $I(U;Z)$, which is convex in
$P_{X|YUS}$. Since $I(U;Z)$ only appears
with a positive sign on the RHS of \eqref{eq:main}, it follows that the
union over all joint distributions of the form~\eqref{eq:main} can be replaced
by a union only over those where $x$ is a deterministic function of
$(y,u,s)$, i.e., of the form
\begin{IEEEeqnarray}{rCl}
  \lefteqn{P_{XYZSU}(x,y,z,s,u)}~~~~ \nonumber \\    & = &
  P_S(s) \,
  P_{YU|S}(y,u|s) \, \mathbf{1}\bigl\{x=g(y,u,s)\bigr\} \, 
  W(z|x,s) \label{eq:distribution2} \IEEEeqnarraynumspace
\end{IEEEeqnarray}
for some $g\colon (y,u,s)\mapsto x$ (and subject to~\eqref{eq:requirey}).
We shall thus only establish the achievability of rate pairs that
satisfy~\eqref{eq:main} for some distribution of the form
\eqref{eq:distribution2}.

Choose a stochastic kernel $P_{YU|S}$ and a mapping $g\colon
(y,u,s)\mapsto x$ which, combined with $P_S$ and the channel law,
determines the joint distribution \eqref{eq:distribution2} for which
\eqref{eq:requirey} is satisfied.
For a given block-length $n$, we construct a random
code as follows: 

\textbf{Codebook:}
  Generate
  $2^{nR_y}$ $y$-bins, each containing 
$2^{n\tilde{R}_y}$ $y$-tuples where the $l_{y}$-th $y$-tuple in the
$m_{y}$-th bin 
\begin{equation*}
  \vect{y}(m_y,l_y), \quad 
m_y\in\{1,\ldots,2^{nR_y}\}, \; l_y\in\{1,\ldots,2^{n\tilde{R}_y}\}
\end{equation*}
is generated IID according to $P_Y$ (the $Y$-marginal
of~\eqref{eq:distribution2}) independently of the other
$y$-tuples. Additionally, 
generate $2^{nR_z}$ $u$-bins, each containing $2^{n\tilde{R}_z}$
$u$-tuples, where the $l_{z}$-th $u$-tuple in the $m_{z}$-th $u$-bin
\begin{equation*}
  \vect{u}(m_z,l_z), \quad m_z \in \{1,\ldots,2^{nR_z}\}, \; l_z\in \{1,
  \ldots, 2^{n\tilde{R}_z}\} 
\end{equation*}
is drawn IID
according to $P_U$ (the $U$-marginal of
\eqref{eq:distribution2}) independently of the other $u$-tuples
and of the $y$-tuples.

\textbf{Encoder:} To send Message~$m_y\in\{1,\ldots,2^{n R_y}\}$ to
  the deterministic receiver and Message~$m_z\in\{1,\ldots,2^{n R_z}\}$
  to the nondeterministic receiver, 
  look for a $y$-tuple $\vect{y}(m_y,l_y)$ in
  $y$-bin~$m_{y}$ and a $u$-tuple $\vect{u}(m_z,l_z)$ in $u$-bin~$m_{z}$
  such that $\bigl( \vect{y}(m_y,l_y), \vect{u}(m_z,l_z) \bigr)$ is
  jointly typical with the state sequence $\vect{s}$:
  \begin{equation}\label{eq:jointtyp}
    \bigl( \vect{y}(m_y,l_y), \vect{u}(m_z,l_z), \vect{s} \bigr) \in
    \set{T}_\epsilon^{(n)}\left(P_{YUS}\right),
  \end{equation}
  where $\set{T}_\epsilon^{(n)}(\cdot)$ denotes the
  \emph{$\epsilon$-strongly 
  typical set} with respect to a certain distribution. If such a pair
  can be found, send
  \begin{equation}
    \vect{x}=g\bigl(\vect{y}(m_y,l_y), \vect{u}(m_z,l_z), \vect{s} \bigr),
  \end{equation}
  where in the above $g(\vect{y}, \vect{u},\vect{s})$ denotes the
  application of the function $g(y,u,s)$ componentwise. (Note that in
  this case 
  the sequence received by the deterministic receiver will be
  $\vect{y}(m_y,l_y)$.) Otherwise send an arbitrary codeword.

\textbf{Deterministic decoder:} Try to find the \emph{unique}
  $y$-bin, say $m_y'$, 
  that contains the received sequence $\vect{y}$ and output its
  number $m_y'$. If there is more than one such bin, declare an error.

\textbf{Nondeterministic decoder:} Try to find the
  \emph{unique} $u$-bin $m_z'$ which contains a 
  $\vect{u}(m_z',l_z')$ that is jointly typical with the received
  sequence $\vect{z}$:
  \begin{equation}
    \bigl(\vect{u}(m_z',l_z'), \vect{z}\bigr) \in
    \set{T}_{2\epsilon}^{(n)}\left(P_{UZ}\right),
  \end{equation}
  and output $m_z'$. If more than
  one or no such bin can be found, declare an error.

We next analyze the error probability of the above coding
scheme. There are three types of errors:

  \textbf{Encoder errs.} This happens only if there is no pair
    $(l_y,l_z)\in 
    \{1,\ldots,2^{n\tilde{R}_y}\} \times
    \{1,\ldots,2^{n\tilde{R}_z}\}$  that satisfies
    \eqref{eq:jointtyp}. To bound this 
    probability, we use the Multivariate
    Covering Lemma \cite[Lemma 8.2]{elgamalkim11}, which we restate as
    follows:
    \begin{lemma}\label{lem:mcl}
      Fix some joint distribution $P_{A_{(0)}\ldots A_{(k)}}$ on
      $(A_{(0)},\ldots,A_{(k)})$, and fix positive $\tilde{\epsilon}$ and
      $\epsilon$ 
      with $\tilde{\epsilon}<\epsilon$. Let $A_{(0)}^n$ be a random
      sequence satisfying
      \begin{equation}
        \lim_{n\to\infty} \Pr \left[ A_{(0)}^n \in
          \set{T}_{\tilde{\epsilon}}^{(n)} (P_{A_{(0)}}) \right] 
        = 1.
      \end{equation}
      For each $j\in\{1,\ldots,k\}$, let $A_{(j)}^n(m_j)$,
      $m_j\in\{1,\ldots,2^{nr_j}\}$, be pairwise
      independent conditional on $A_{(0)}^n$, each distributed according
      to 
      $\prod_{i=1}^n P_{A_{(j)}|A_{(0)}=a_{(0),i}}$. Assume that
      \begin{equation*}
      \left\{A_{(j)}^n(m_j)\colon
        m_j\in\{1,\ldots,2^{nr_j}\}\right\},\quad j \in\{1,\ldots, k\}
      \end{equation*}
      are mutually independent
      conditional on $A_{(0)}^n$. Then there exists $\delta(\epsilon)$
      which tends to zero as $\epsilon$ tends to zero such that
      \begin{equation}
        \lim_{n\to\infty} \Pr \left[\substack{\displaystyle 
          (A_{(0)}^n,A_{(1)}^n(m_1),\ldots,A_{(k)}^n(m_k)) \notin
          \set{T}_\epsilon^{(n)}\\ \displaystyle 
        ~~~~~~~~~~~~~~~~~~\textnormal{ for all }(m_1,\ldots,m_k)}
        \right]  =  0
      \end{equation}
      provided that, for all $\set{J}\subseteq \{1,\ldots,k\}$ with
      $|\set{J}| \ge 2$,
      \begin{equation}
        \sum_{j\in\set{J}} r_j > \sum_{j\in\set{J}} H(A_{(j)}|A_{(0)}) -
        H(\{A_{(j)}\colon j\in\set{J}\} | A_{(0)}) + \delta(\epsilon),
      \end{equation}
      where the conditional entropies are computed with respect to
      $P_{A_{(0)}\ldots A_{(k)}}$.
    \end{lemma}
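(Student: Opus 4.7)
The plan is a second-moment (Paley-Zygmund) argument. Since $A_{(0)}^n\in\set{T}_{\tilde{\epsilon}}^{(n)}(P_{A_{(0)}})$ with probability tending to one, I condition on a fixed typical realization $A_{(0)}^n=\vect{a}_{(0)}$ and define the counting random variable
\begin{equation*}
  N = \sum_{(m_1,\ldots,m_k)} \I{\bigl(\vect{a}_{(0)},A_{(1)}^n(m_1),\ldots,A_{(k)}^n(m_k)\bigr)\in\set{T}_\epsilon^{(n)}},
\end{equation*}
and aim to show $\Pr[N>0\mid A_{(0)}^n=\vect{a}_{(0)}]\to 1$ via the Paley-Zygmund inequality $\Pr[N>0]\ge \Exp[N]^2/\Exp[N^2]$.

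For the first moment, the conditional typicality lemma implies that for each fixed $(m_1,\ldots,m_k)$ the probability that the associated $k$-tuple is jointly typical with $\vect{a}_{(0)}$ equals $2^{-n[\sum_j H(A_{(j)}|A_{(0)})-H(A_{(1)},\ldots,A_{(k)}|A_{(0)})\pm\delta(\epsilon)]}$. Multiplying by the $2^{n\sum_j r_j}$ candidate tuples and invoking the lemma's hypothesis for $\set{J}=\{1,\ldots,k\}$ makes $\Exp[N]$ grow exponentially in $n$.

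For the second moment, I split the sum over pairs $(m,m')$ according to the difference set $\set{J}=\{j\colon m_j\ne m_j'\}$. The assumed pairwise independence for each $j$ combined with the mutual independence across $j$ makes the shared codewords $(A_{(j)}^n(m_j))_{j\in\set{J}^{\textnormal{c}}}$ common to both tuples while $(A_{(j)}^n(m_j))_{j\in\set{J}}$ and $(A_{(j)}^n(m_j'))_{j\in\set{J}}$ are conditionally independent extensions. Counting pairs and bounding the joint-typicality probability by the standard conditional-entropy exponent yields
\begin{equation*}
  \frac{\Exp[N^2]}{\Exp[N]^2} \le 1 + \sum_{\substack{\set{J}\subseteq\{1,\ldots,k\}\\ \set{J}\ne\emptyset}} 2^{-n\bigl[\sum_{j\in\set{J}}r_j-\sum_{j\in\set{J}}H(A_{(j)}|A_{(0)})+H(\{A_{(j)}\}_{j\in\set{J}}|A_{(0)})-\delta(\epsilon)\bigr]},
\end{equation*}
where the $|\set{J}|=1$ terms have exponent $-n[r_j-\delta(\epsilon)]$ and are trivially negative, while each $|\set{J}|\ge 2$ term is exponentially small by the lemma's hypothesis. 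Hence the ratio tends to one, $\Pr[N>0]\to 1$, and deconditioning on the typical $\vect{a}_{(0)}$ closes the argument.

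The main obstacle, and the step that does all the work, is the second-moment bound: identifying the correct joint-typicality exponent for two overlapping tuples that agree on $\set{J}^{\textnormal{c}}$ and disagree on $\set{J}$, and verifying that the shared-coordinate entropies cancel cleanly between $\Exp[N^2]$ and $\Exp[N]^2$ so that only the difference-set exponents survive. Getting this bookkeeping right---in particular, recognizing that singletons contribute nothing and that the true constraints come from $|\set{J}|\ge 2$---is the one delicate point. Everything else is a routine invocation of the conditional typicality lemma together with the Paley-Zygmund inequality.
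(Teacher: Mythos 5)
The paper does not prove this lemma: it is quoted verbatim as Lemma~8.2 of El~Gamal and Kim's \emph{Network Information Theory} and invoked as a black box. So the comparison is really against the textbook proof, which is, as you suspect, a second-moment argument. Your high-level plan---conditioning on a typical $A_{(0)}^n$, counting jointly typical tuples with a random variable $N$, and applying a Chebyshev/Paley--Zygmund bound $\Pr[N=0]\le\Var(N)/\Exp[N]^2$---is the right skeleton.

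However, the second-moment bookkeeping, which you yourself identify as ``the step that does all the work,'' contains a substantive error: the exponent you wrote is indexed by the wrong set. Fix a pair $(m,m')$ and let $\set{K}=\{j\colon m_j=m_j'\}$ be the \emph{shared} set and $\set{J}=\set{K}^{\textnormal{c}}$ the \emph{difference} set. Conditioned on $\bigl(A_{(0)}^n,\{A_{(j)}^n(m_j)\}_{j\in\set{K}}\bigr)$, the two typicality events are independent with a common conditional probability $q$, so $\Pr[\text{both typical}]=\Exp[q^2]$. Bounding $\Exp[q^2]\le(\max q)\,\Exp[q]$ via the conditional joint-typicality lemma, multiplying by the number of pairs with shared set $\set{K}$ (which is at most $2^{n\sum_j r_j}\cdot 2^{n\sum_{j\notin\set{K}}r_j}$), and dividing by $\Exp[N]^2$, the $2^{n\sum_{j\notin\set{K}}r_j}$ factor cancels against part of $\Exp[N]^2$, leaving an exponent of the form
\begin{equation*}
-n\Bigl[\textstyle\sum_{j\in\set{K}}r_j-\sum_{j\in\set{K}}H(A_{(j)}|A_{(0)})+H\bigl(\{A_{(j)}\}_{j\in\set{K}}\,\big|\,A_{(0)}\bigr)-\delta(\epsilon)\Bigr].
\end{equation*}
That is, the surviving rate sum and entropy deficit are over the \emph{shared} coordinates $\set{K}$, not over the difference set $\set{J}$ as you wrote. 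Intuitively this is the right sign: the more choices $2^{nr_j}$ you have in a coordinate, the less likely two independently chosen tuples are to coincide there, so $r_j$ for $j\in\set{K}$ is what suppresses the excess correlation. (A sanity check: your formula assigns the constraint to $\set{J}=\{1,\ldots,k\}$, i.e.\ to pairs that differ in every coordinate, but those pairs are fully independent and contribute zero to $\Var(N)$; conversely the diagonal $m=m'$, which has $\set{J}=\emptyset$ and \emph{does} need the full-rate-sum constraint through $1/\Exp[N]\to 0$, is the one your indexing drops.)

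Once the roles of $\set{J}$ and its complement are swapped, a second, subtler issue remains, which your remark ``the $|\set{J}|=1$ terms have exponent $-n[r_j-\delta(\epsilon)]$ and are trivially negative'' sweeps past: if some $r_j\le\delta(\epsilon)$ (in particular $r_j=0$), that exponent is not negative, and the crude bound $\Var(q)\le(\max q)\Exp[q]$ is not good enough. This is not a pathology you can ignore here, because the paper applies the lemma with $A_{(3)}=S$ and $r_3=0$. The textbook proof handles this by showing that for a singleton shared set $q$ is nearly deterministic (its log is within $O(n\delta(\epsilon))$ of a constant on a set of probability tending to one), and hence its relative variance is controlled without imposing $r_j>\delta(\epsilon)$; that cancellation is precisely why the lemma only needs the constraints for $|\set{J}|\ge 2$. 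Your write-up should either reproduce that refinement or explicitly restrict to $r_j>0$ and then reduce the $r_j=0$ case by absorbing the single codeword into $A_{(0)}^n$.
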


    We apply Lemma~\ref{lem:mcl} by choosing $k=3$, $A_{(0)}=0$
    (deterministic) so $\tilde{\epsilon}=0$, and
    \begin{subequations}
    \begin{IEEEeqnarray}{rCl}
      A_{(1)} & = & Y,\quad r_1 = \tilde{R}_y,\\
      A_{(2)} & = & U,\quad r_2 = \tilde{R}_z,\\
      A_{(3)} & = & S,\quad r_3 = 0.
    \end{IEEEeqnarray}
    \end{subequations}
    The joint distribution is chosen to be $P_{YUS}$.  We then obtain
    that the probability that the encoder errs tends to zero as $n$
    tends to infinity provided that
    \begin{subequations}\label{eq:Rtilde}
    \begin{IEEEeqnarray}{rCl}
      \tilde{R}_y & > & I(Y;S)+\delta(\epsilon)\\
      \tilde{R}_z & > & I(U;S)+\delta(\epsilon)\\
      \tilde{R}_y+\tilde{R}_z & > &
      H(Y)+H(U)+H(S)\nonumber\\
      & & {} -H(Y,U,S) + \delta(\epsilon). \IEEEeqnarraynumspace
    \end{IEEEeqnarray}
    \end{subequations}

  \textbf{Deterministic decoder errs.} This happens only if
    there is more 
    than one bin that contains the received 
    $\vect{y}$. We may now assume that the encoding was successful so
    \eqref{eq:jointtyp} is satisfied. Then
    $\vect{y}$ is in $\set{T}_\epsilon^{(n)}(P_Y)$, and
    \begin{equation}\label{eq:HY}
      P_Y(\vect{y}) \le 2^{-n(H(Y)-\delta(\epsilon))}
    \end{equation}
    where $\delta(\epsilon)$ tends to zero when $\epsilon$ tends to
    zero. Hence the probability that a specific $y$-tuple in a bin
    that was not 
    chosen by the encoder, which, by our code construction, was
    independently chosen from the received $\vect{y}$, happens to be
    the same as $\vect{y}$, is upper-bounded by the RHS of
    \eqref{eq:HY}. Further note that the total number of $y$-tuples
    outside the bin chosen by the encoder is
    $2^{n\tilde{R}_y}\left(2^{nR_y}-1\right)$. Using the union bound,
    we obtain that the probability that the deterministic decoder errs is
    at most 
    \begin{equation}
      2^{n\tilde{R}_y}\left(2^{nR_y}-1\right)
      2^{-n(H(Y)-\delta(\epsilon))},
    \end{equation} 
    which tends to zero as $n$ tends to 
    infinity provided that
    \begin{equation}
      R_y+\tilde{R}_y < H(Y)-\delta(\epsilon). \label{eq:R1}
    \end{equation}

  \textbf{Nondeterministic decoder errs.} This happens if
    either the $u$-tuple 
    $\vect{u}(m_z,l_z)$ is not jointly typical with the received
    $z$-tuple, or if a $u$-tuple in a different bin
    happens to be
    jointly typical with the received $z$-tuple. Assuming that the encoding
    was successful, the probability of
    the former case tends to zero as $n$ tends to infinity by
    \eqref{eq:jointtyp} and by the
    Markov Lemma~\cite[Lemma~12.1]{elgamalkim11}. To upper-bound the
    probability of the latter case, note that any
    $\vect{u}(m_z',l_z')$, where $m_z'\neq m_z$, is
    chosen independently of 
    $\vect{u}(m_z,l_z)$ and $\vect{y}(m_y,l_y)$, and is hence also
    independent of the received $\vect{z}$. By the Joint Typicality
    Lemma \cite[p.29]{elgamalkim11} we have
    \begin{equation}
      \Pr \left[ \left(\vect{U}(m_z',l_z'), \vect{Z} \right)
      \in \set{T}_{2\epsilon}^{(n)}(P_{UZ}) \right] \le
    2^{-n\left(I(U;Z)-\delta(\epsilon)\right)}
    \end{equation}
    where the probability is computed with respect to the randomly
    chosen codebook. Next note that the total number of such
    $u$-tuples is $2^{n\tilde{R}_z}\left(2^{n R_z}-1\right)$. 
    Applying the union bound, we obtain that the probability that
    there exists at least one $u$-tuple that is not in the chosen bin
    but that is jointly typical with $\vect{z}$ is at most 
    \begin{equation}
      2^{n\tilde{R}_z}\left(2^{n R_z}-1\right)
      2^{-n\left(I(U;Z)-\delta(\epsilon)\right)}, 
    \end{equation}
    which tends to zero
    as $n$ tends to infinity provided that
    \begin{equation}
      R_z+\tilde{R}_z < I(U;Z) - \delta(\epsilon). \label{eq:R2}
    \end{equation}

Summarizing \eqref{eq:Rtilde}, \eqref{eq:R1}, and \eqref{eq:R2}, and
letting $\epsilon$ tend to zero, we
conclude that the above coding scheme has vanishing error
probability as $n$ tends to infinity for all $(R_y,R_z)$
satisfying~\eqref{eq:main}. By time-sharing we further achieve the 
convex hull of all rate-pairs satisfying \eqref{eq:main} for joint
distributions of the form \eqref{eq:distribution2}.
This concludes the proof of the direct
part of Theorem~\ref{thm:main}.

\section{Converse Part}\label{sec:converse}


In this section we show that, even if the state sequence $\vect{S}$ is
revealed 
to the deterministic receiver (which observes $\vect{Y}$), any achievable
rate-pair must be in the convex closure of the union of rate-pairs
satisfying \eqref{eq:main}. 

Given any code of block-length $n$, we
first derive a bound on $R_y$:
\begin{IEEEeqnarray}{rCl}
  nR_y & = & H(M_y) \\
  & \le & I(M_y; Y^n,S^n) + n\epsilon_n \label{eq:R1_y}\\
  & = & I(M_y;Y^n|S^n) + n\epsilon_n \label{eq:R1_z}\\
  & = & \sum_{i=1}^n I(M_y;Y_i|Y^{i-1},S^n)
  +n\epsilon_n \label{eq:R1_3}\\ 
  & \le & \sum_{i=1}^n H(Y_i|Y^{i-1},S^n) +
  n\epsilon_n \label{eq:R1_4}\\ 
  & \le & \sum_{i=1}^n H(Y_i|S_i) + n\epsilon_n,\label{eq:R1_5}
\end{IEEEeqnarray}
where $\epsilon_n$ tends to zero as $n$ tends to infinity. Here,
\eqref{eq:R1_y} follows from Fano's Inequality; \eqref{eq:R1_z}
because $M_y$ and $S^n$ are independent; \eqref{eq:R1_3} from the
chain rule; \eqref{eq:R1_4} by dropping negative terms; and
\eqref{eq:R1_5} because conditioning cannot increase entropy.

We next bound $R_z$ as in 
\cite{gelfandpinsker80_3}: 
\begin{IEEEeqnarray}{rCl}
  nR_z & = & H(M_z)\\
  & \le & I(M_z;Z^n) + n\epsilon_n \label{eq:R2_y}\\
  & = & \sum_{i=1}^n I(M_z;Z_i|Z^{i-1}) +
  n\epsilon_n \label{eq:R2_z}\\
  & = & \sum_{i=1}^n I\left( \left. M_z,S_{i+1}^n; Z_i \right| Z^{i-1}
    \right) \nonumber \\    && {}- \sum_{i=1}^n
    I\left(\left. S_{i+1}^n;Z_i 
      \right| M_z, 
      Z^{i-1}\right) + n\epsilon_n \label{eq:R2_3} \\
  & = & \sum_{i=1}^n I\left( \left. M_z,S_{i+1}^n; Z_i \right| Z^{i-1}
    \right) \nonumber \\
    &&{} - \sum_{i=1}^n I\left( Z^{i-1}; S_i \left| M_z, S_{i+1}^n
      \right. \right) + n\epsilon_n
    \IEEEeqnarraynumspace \label{eq:R2_4} \\ 
  & = & \sum_{i=1}^n I\left( \left. M_z,S_{i+1}^n; Z_i \right| Z^{i-1}
    \right) \nonumber \\   
    && {} - \sum_{i=1}^n I\left(  M_z, Z^{i-1}, S_{i+1}^n ; S_i
    \right) + n\epsilon_n \label{eq:R2_5} \\
  & \le &  \sum_{i=1}^n I\left( M_z, Z^{i-1}, S_{i+1}^n; Z_i \right)
  \nonumber \\    &&{}-
  \sum_{i=1}^n I\left( M_z, Z^{i-1}, S_{i+1}^n ; S_i
    \right) + n\epsilon_n \label{eq:R2_6} \\
  & = & \sum_{i=1}^n I(V_i;Z_i) - I(V_i;S_i) + n\epsilon_n. \label{eq:R2_7}
\end{IEEEeqnarray}
Here, \eqref{eq:R2_y} follows from Fano's Inequality;
\eqref{eq:R2_z} and \eqref{eq:R2_3} from the
chain rule; \eqref{eq:R2_4} from Csisz\'ar's Identity
\cite{csiszarkorner81}
\begin{equation}
  \sum_{i=1}^n I \left( \left. C_{i+1}^n;D_i \right|D^{i-1} \right) =
  \sum_{i=1}^n I\left (D^{i-1};C_i\left|C_{i+1}^n
    \right. \right); \label{eq:csiszar} 
\end{equation}
\eqref{eq:R2_5} because $S_i$ and $(M_z,S_{i+1}^n)$ are independent;
\eqref{eq:R2_6} from the chain rule and by dropping negative terms;
and \eqref{eq:R2_7} by defining the 
auxiliary random variables
\begin{equation}\label{eq:defV}
  V_i \triangleq (M_z,Z^{i-1},S_{i+1}^n), \quad i \in\{1,\ldots,n\}.
\end{equation}

We next bound the sum rate $R_y+R_z$:
\begin{IEEEeqnarray}{rCl}
  n(R_y+R_z) & = & H(M_y,M_z)\\
  & = & H(M_z) +H(M_y|M_z) \\
  & \le & I(M_z; Z^n) + I(M_y;Y^n,S^n|M_z) +
  n\epsilon_n, \IEEEeqnarraynumspace\label{eq:R12_y}
\end{IEEEeqnarray}
where the last step follows from Fano's
Inequality. Of the two mutual informations on the RHS of
\eqref{eq:R12_y} we first bound $I(M_z;Z^n)$:
\begin{IEEEeqnarray}{rCl}
  I(M_z;Z^n) & = & \sum_{i=1}^n I(M_z;Z_i|Z^{i-1}) \label{eq:R12_z1}\\ 
  & \le & \sum_{i=1}^n I(M_z,Z^{i-1}; Z_i) \label{eq:R12_z2}\\
  & = & \sum_{i=1}^n I \left( \left. M_z, Z^{i-1},S_{i+1}^n,
      Y_{i+1}^n; Z_i \right.\right) \nonumber \\    &&{} -
  \sum_{i=1}^n 
  I\left(\left. S_{i+1}^n, Y_{i+1}^n; Z_i \right| M_z, Z^{i-1} \right)
  \label{eq:R12_z3}\\
  & = & \sum_{i=1}^n I \left( \left. M_z, Z^{i-1},S_{i+1}^n,
      Y_{i+1}^n; Z_i \right.\right) \nonumber \\   &&{} -
  \sum_{i=1}^n 
  I\left( Z^{i-1}; S_i,Y_i\left| M_z,S_{i+1}^n,Y_{i+1}^n \right.\right)
  \label{eq:R12_z4} \IEEEeqnarraynumspace \\
  & = & \sum_{i=1}^n I \left( \left. M_z, Z^{i-1},S_{i+1}^n,
      Y_{i+1}^n; Z_i \right.\right) \nonumber \\   &&{} -
  \sum_{i=1}^n 
  I\left( M_z,Z^{i-1},S_{i+1}^n,Y_{i+1}^n; S_i,Y_i \right) \nonumber
  \\   
  & & {} + \sum_{i=1}^n
  I\left(M_z,S_{i+1}^n,Y_{i+1}^n; S_i,Y_i\right). \label{eq:R12_z5}
\end{IEEEeqnarray}
Here, \eqref{eq:R12_z1}, \eqref{eq:R12_z2},
and \eqref{eq:R12_z3} follow from the chain
rule; \eqref{eq:R12_z4} by applying Csisz\'ar's Identity
\eqref{eq:csiszar} between $(S^n,Y^n)$ and $Z^n$; and
\eqref{eq:R12_z5} again from the chain rule. 

We next study the sum of the last term on the RHS of \eqref{eq:R12_z5}
and the second mutual information on the RHS of \eqref{eq:R12_y}:
\begin{IEEEeqnarray}{rCl}
  \lefteqn{\sum_{i=1}^n I\left(M_z,S_{i+1}^n,Y_{i+1}^n;S_i,Y_i\right)
    + I(M_y;Y^n,S^n|M_z)} ~~~~~~~\nonumber \\   
  & = & \sum_{i=1}^n I\left(M_z,S_{i+1}^n,Y_{i+1}^n; S_i,Y_i\right)
  \nonumber \\  
  & & {} +
  \sum_{i=1}^n I\left(M_y;S_i,Y_i\left| M_z,S_{i+1}^n,Y_{i+1}^n
    \right. \right) \label{eq:R12_31}\\
  & = & \sum_{i=1}^n I\left(M_y,M_z,S_{i+1}^n,Y_{i+1}^n ; S_i,Y_i
  \right) \label{eq:R12_32}\\
  & = & \sum_{i=1}^n I\left(M_y,M_z,S_{i+1}^n,Y_{i+1}^n ; S_i,Y_i
  \right) \nonumber \\  
  & & {} + \sum_{i=1}^n I\left(S^{i-1} ; S_i,Y_i \left|
      M_y,M_z,S_{i+1}^n, Y_{i+1}^n \right. \right) \nonumber
  \\   
  & & {} - \sum_{i=1}^n I\left(\left.S_{i+1}^n, Y_{i+1}^n;S_i \right|
    M_y,M_z, S^{i-1} \right) \label{eq:R12_33}\\
  & = & \sum_{i=1}^n I\left(M_y,M_z, S^{i-1},S_{i+1}^n,
    Y_{i+1}^n ; S_i,Y_i\right) \nonumber \\  
  & & {} - \sum_{i=1}^n I\left(\left. S_{i+1}^n, Y_{i+1}^n ; S_i \right|
    M_y,M_z, S^{i-1} \right) \label{eq:R12_34} \\
  & = & \sum_{i=1}^n I\left(M_y,M_z, S^{i-1},S_{i+1}^n,
    Y_{i+1}^n ; S_i,Y_i\right) \nonumber \\  
  & & {} - \sum_{i=1}^n I \left(
    M_y,M_z,S^{i-1},S_{i+1}^n,Y_{i+1}^n ; S_i \right) \label{eq:R12_35} \\
  & = & \sum_{i=1}^n I \left( \left. M_y,M_z,S^{i-1},S_{i+1}^n,
      Y_{i+1}^n; Y_i \right| S_i \right) \label{eq:R12_36}\\
  & = & \sum_{i=1}^n H(Y_i|S_i). \label{eq:R12_37}
\end{IEEEeqnarray}
Here, \eqref{eq:R12_31} and \eqref{eq:R12_32} follow from the chain
rule; \eqref{eq:R12_33} by applying Csisz\'ar's Identity between
$(S^n, Y^n)$ and $S^n$; 
\eqref{eq:R12_34}~from the chain rule; \eqref{eq:R12_35} because $S_i$
and $(M_y,M_z,S^{i-1})$ are independent; \eqref{eq:R12_36} again from the
chain rule; and \eqref{eq:R12_37} because, given $(M_y,M_z,S^n)$, the
channel inputs $X^n$ are determined by the encoder, and hence $Y^n$
are also determined, so
\begin{equation}
  H\left(Y_i\left|M_y,M_z,S^n,Y_{i+1}^n\right.\right)=0.
\end{equation} 

Combining \eqref{eq:R12_y}, \eqref{eq:R12_z5}, and
\eqref{eq:R12_37}, using the definitions \eqref{eq:defV}, and further
defining 
\begin{equation}\label{eq:defT}
  T_i \triangleq Y_{i+1}^n,\quad i\in\{1,\ldots,n\},
\end{equation}
we obtain
\begin{IEEEeqnarray}{rCl}
  n(R_y+R_z) & \le &  \sum_{i=1}^n I (V_i,T_i; Z_i)-\sum_{i=1}^n
  I(V_i,T_i;S_i,Y_i ) \nonumber \\    &&{}+
  \sum_{i=1}^n H(Y_i|S_i) + n \epsilon_n. \label{eq:R12_4}
\end{IEEEeqnarray}

Summarizing \eqref{eq:R1_5}, \eqref{eq:R2_7}, and \eqref{eq:R12_4} and
letting $n$ tend to infinity, we
obtain that any achievable rate-pair $(R_y,R_z)$ must be
contained in the convex closure of the union of rate-pairs satisfying
\begin{subequations}\label{eq:almost}
\begin{IEEEeqnarray}{rCl}
    R_y & < & H(Y|S)\\
    R_z & < & I(V;Z)-I(V;S)\\
    R_y+R_z & < & H(Y|S)+I(V,T;Z) - I(V,T;S,Y) \IEEEeqnarraynumspace
\end{IEEEeqnarray}
\end{subequations}
where, given $(X,S)$, the outputs $(Y,Z)$ are drawn according to the
channel law \eqref{eq:channel} independently of the auxiliary
random variables $(V,T)$. 

To prove the converse part of
Theorem~\ref{thm:main}, it remains to replace $V$ and $T$ with a
single auxiliary random variable. I.e., it remains to find an
auxiliary random variable $U$ such that
\begin{subequations}\label{eq:strange}
  \begin{equation}
    \label{eq:strange1}
    I(V;Z)-I(V;S) \le I(U;Z)-I(U;S)
  \end{equation}
  and
  \begin{multline}
    \label{eq:strange2}
    H(Y|S)+I(V,T;Z)-I(V,T;S,Y) \le  \\ H(Y|S) + I(U;Z) - I(U;S,Y).
  \end{multline}
\end{subequations}
In fact, as we shall see, either choosing $U$ to be $V$ will satisfy
\eqref{eq:strange} or else choosing it to be $(V,T)$
will satisfy \eqref{eq:strange}. If we choose $U=V$, then
\eqref{eq:strange1} is satisfied with equality, and
the requirement \eqref{eq:strange2} becomes
\begin{equation}\label{eq:strange3}
  I(T;Z|V)-I(T;S,Y|V) \le 0.
\end{equation}
On the other hand, if we choose $U=(V,T)$, then \eqref{eq:strange2} is
satisfied with equality, and the requirement \eqref{eq:strange1} becomes
\begin{equation}\label{eq:strange4}
  I(T;Z|V)-I(T;S|V) \ge 0.
\end{equation}
It remains to show that \emph{at least one} of the two requirements
\eqref{eq:strange3} 
and \eqref{eq:strange4} must be satisfied: if it is
\eqref{eq:strange3}, then we shall choose $U$ as $V$, and if it is
\eqref{eq:strange4}, then we shall choose $U$ as $(V,T)$.
To this end we note that for all random variables $T,Z,V,S,Y$
\begin{equation}
  I(T;Z|V)-I(T;S,Y|V) \le I(T;Z|V)-I(T;S|V),
\end{equation}
because the RHS minus the left-hand side is $I(T;Y|S,V)$, which is
nonnegative. This implies that at least one of \eqref{eq:strange3}
and \eqref{eq:strange4} must hold. We have thus shown that there
must exist a $U$ which satisfies both inequalities in
\eqref{eq:strange}, hence 
the bounds \eqref{eq:almost} can be relaxed to \eqref{eq:main}. This
concludes the proof of the converse part of Theorem~\ref{thm:main}.

\section{An Example}\label{sec:example}

Consider a broadcast channel whose input, output, and
state alphabets are all binary and whose law is
\begin{subequations}\label{eq:example}
\begin{IEEEeqnarray}{rCl}
  P_S(1) & = & 1-P_S(0) = \sigma\\
  Y & = & x\oplus S\\
  W(Z=x|x,s) & = & 1- W(Z=x\oplus 1|x,s)= 1-p \IEEEeqnarraynumspace
\end{IEEEeqnarray}
\end{subequations}
for some constants $0 \leq p,\sigma \leq 1$.  The deterministic
output~$Y$ of this channel is the modulo-two sum of the input $x$ and
the state $S$, and the channel from $x$ to the nondeterministic
output~$Z$ is unaffected by the state and is a binary symmetric
channel with crossover probability~$p$.

To cancel the state's effect, the encoder could flip the input~$x$ whenever
$S=1$, but this would hurt the nondeterministic receiver. In fact, 
if the state is unbiased ($\sigma = 0.5$), and if \emph{only causal}
state-information is available at the encoder,\footnote{By ``causal''
  we mean that the transmitter, when transmitting $X_i$, knows the
  past and present states $S^i$ but not the future states
  $S_{i+1}^n$.} then one cannot do better than time-sharing:

\begin{proposition}\label{prp:causal}
  The capacity region of the channel \eqref{eq:example} with
  $\sigma=0.5$ when the 
  states are known \emph{causally} to the transmitter but not to the
  receivers, is the union over $\lambda \in [0,1]$ of rate-pairs
  $(R_y,R_z)$ satisfying
  \begin{subequations}\label{eq:causal}
  \begin{IEEEeqnarray}{rCl}
    R_y & \le & \lambda\\
    R_z & \le & (1-\lambda) \bigl( 1-\Hb(p) \bigr).
  \end{IEEEeqnarray}
  \end{subequations}
  I.e., it is the collection of rate pairs satisfying
  \begin{equation}
    R_y + \frac{R_z}{1-\Hb(p)} \leq 1.
  \end{equation}
\end{proposition}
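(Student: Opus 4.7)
The two descriptions of the region are equivalent: eliminating $\lambda$ from $R_y\le\lambda$ and $R_z\le(1-\lambda)(1-\Hb(p))$ gives exactly $R_y+R_z/(1-\Hb(p))\le 1$. Achievability is by straightforward time-sharing: for a chosen $\lambda\in[0,1]$, devote $\lfloor\lambda n\rfloor$ channel uses to the deterministic receiver by transmitting $X_i=S_i\oplus b_i$, where $b_i$ is a bit of $M_y$, so that $Y_i=b_i$ is recovered error-free; on the remaining $n-\lfloor\lambda n\rfloor$ uses the encoder ignores the state and transmits $M_z$ with a BSC$(p)$ capacity-achieving code---the mapping $X\mapsto Z$ is BSC$(p)$ independently of $S$---reliably achieving any $R_z<(1-\lambda)(1-\Hb(p))$.

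For the converse I use Shannon strategies. At time $i$ let
\[
T_i\triangleq\bigl(f_i(M_y,M_z,S^{i-1},0),\,f_i(M_y,M_z,S^{i-1},1)\bigr)\in\{t_{00},t_{01},t_{10},t_{11}\},
\]
so that $X_i=T_i(S_i)$ and $T_i$ is a function of $(M_y,M_z,S^{i-1})$; thus $T_i$ is independent of the fresh $S_i$ and $N_i$, and both $Y_i=T_i(S_i)\oplus S_i$ and $Z_i=T_i(S_i)\oplus N_i$ depend on $T_i$ only through the independent fresh randomness. Since $(M_y,Y^{i-1})$ and $T_i$ are both functions of $(M_y,M_z,S^{i-1})$, the chain $(M_y,Y^{i-1})\to T_i\to Y_i$ is Markov; analogously $(M_z,Z^{i-1})\to T_i\to Z_i$. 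The chain rule and the data-processing inequality then give
\[
I(M_y;Y^n)\le\sum_{i=1}^n I(T_i;Y_i),\qquad I(M_z;Z^n)\le\sum_{i=1}^n I(T_i;Z_i).
\]

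To bound the per-letter terms, enumerate the four strategies. With $S\sim\mathrm{Bern}(1/2)$ and $N\sim\mathrm{Bern}(p)$ one finds $H(Y\mid T=t_{00})=H(Y\mid T=t_{11})=1$, $H(Y\mid T=t_{01})=H(Y\mid T=t_{10})=0$, $H(Z\mid T=t_{00})=H(Z\mid T=t_{11})=\Hb(p)$, and $H(Z\mid T=t_{01})=H(Z\mid T=t_{10})=1$. Setting $\mu=\Pr[T\in\{t_{00},t_{11}\}]$ and using $H(Y),H(Z)\le 1$ gives $I(T;Y)\le 1-\mu$ and $I(T;Z)\le\mu(1-\Hb(p))$, so for every $i$
\[
(1-\Hb(p))\,I(T_i;Y_i)+I(T_i;Z_i)\le 1-\Hb(p).
\]
Summing over $i$ and applying Fano's inequality to $R_y$ and $R_z$ yields $(1-\Hb(p))R_y+R_z\le 1-\Hb(p)+\epsilon_n$, equivalently $R_y+R_z/(1-\Hb(p))\le 1$.

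The main subtlety I anticipate is the verification of the two Markov chains $(M_y,Y^{i-1})\to T_i\to Y_i$ and $(M_z,Z^{i-1})\to T_i\to Z_i$; both ultimately rely on the independence of $S_i$ and $N_i$ from $(M_y,M_z,S^{i-1})$, from which $T_i$, $Y^{i-1}$, and $Z^{i-1}$ are all measurable, but this independence and the associated conditional independences must be traced carefully through the causal encoding.
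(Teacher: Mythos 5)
Your proof is correct and follows essentially the same route as the paper: time-sharing for achievability, and for the converse a single-letterization to Shannon strategies, enumeration of the four possible input maps $s\mapsto x$, and the observation that the two ``fixed'' strategies carry no information to the $Y$-receiver while the two ``flipping'' strategies carry no information to the $Z$-receiver. The only differences are presentational---the paper first states a general outer bound (Lemma~\ref{lem:causal}) with $T_i=(M_y,M_z,S^{i-1})$ and then restricts $T$ to input strategies, and it parametrizes the region by $\lambda=\Pr[T\in\{2,3\}]$ before noting the region reduces to a single linear constraint, whereas you take $T_i$ to be the strategy pair directly and derive the per-letter inequality $(1-\Hb(p))I(T_i;Y_i)+I(T_i;Z_i)\le 1-\Hb(p)$, which is a mildly cleaner way to organize the final step.
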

\begin{proof}
  See Appendix~\ref{app:example}.
\end{proof}

However, with \emph{noncausal} state-information the
transmitter can cancel the effect of the state without hurting the
nondeterministic receiver:
\begin{proposition}\label{prop:example}
  The capacity region of the channel \eqref{eq:example} when the
  states are known noncausally to the transmitter but not to the
  receivers, is the union over $\alpha \in [0,1]$ of rate-pairs
  $(R_y,R_z)$ satisfying
  \begin{subequations} \label{eq:example_region}
  \begin{IEEEeqnarray}{rCl}
    R_y & \le & \Hb(\alpha)\\
    R_z & \le & 1 - \Hb(\beta)
  \end{IEEEeqnarray}
  where
  \begin{equation}\label{eq:beta}
    \beta \triangleq \alpha(1-p) + (1-\alpha)p.
  \end{equation}
  \end{subequations}
\end{proposition}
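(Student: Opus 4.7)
The plan is to derive Proposition~\ref{prop:example} as a specialization of Theorem~\ref{thm:main}, with Mrs.~Gerber's Lemma providing the link between $H(Y|U,S)$ and $H(Z|U,S)$ in the converse.

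For \emph{achievability}, I would fix $\alpha \in [0,1]$ and, in Theorem~\ref{thm:main}, choose the joint distribution in which $X \sim \mathrm{Bern}(1/2)$ and $Q \sim \mathrm{Bern}(\alpha)$ are mutually independent and independent of $S$, and $U = X \oplus Q$. A short computation gives $H(Y|S) = H(X) = 1$; and since the conditional law of $X$ given $U$ is $\mathrm{Bern}(\alpha)$, also $H(Z) = 1$, $H(Z|U) = \Hb(\alpha \ast p) = \Hb(\beta)$, $I(U;S) = 0$, and $I(U;S,Y) = I(U;Y|S) = 1 - \Hb(\alpha)$. Substituting into~\eqref{eq:main} gives an achievable region that contains the rectangle $\{R_y \le \Hb(\alpha),\, R_z \le 1 - \Hb(\beta)\}$; its union over $\alpha$ is~\eqref{eq:example_region}.

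For the \emph{converse}, start from Theorem~\ref{thm:main} and rewrite the sum-rate bound using $I(U;S,Y) = I(U;S) + I(U;Y|S)$, obtaining $R_y + R_z \le H(Y|U,S) + I(U;Z) - I(U;S)$. Since $Y = X \oplus S$ gives $H(X|U,S) = H(Y|U,S)$, define $\alpha_* \in [0,1/2]$ by $\Hb(\alpha_*) = H(Y|U,S)$. Applying Mrs.~Gerber's Lemma conditionally to the $\mathrm{BSC}(p)$ from $X$ to $Z$ (using that the channel noise is independent of $(X,S,U)$) yields $H(Z|U,S) \ge \Hb(\alpha_* \ast p)$. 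Together with $H(Z) \le 1$ and $H(S|U) \le H(S)$, this gives $R_z \le 1 - \Hb(\alpha_* \ast p)$ and $R_y + R_z \le \Hb(\alpha_*) + 1 - \Hb(\alpha_* \ast p)$.

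To place $(R_y,R_z)$ in the union~\eqref{eq:example_region}, split into two cases. If $R_y \le \Hb(\alpha_*)$, take $\alpha = \alpha_*$ and both rectangle constraints hold. Otherwise, because $R_y \le H(Y|S) \le 1$, pick the unique $\alpha \in (\alpha_*,1/2]$ with $\Hb(\alpha) = R_y$; the sum-rate bound then yields $R_z \le \Hb(\alpha_*) + 1 - \Hb(\alpha_* \ast p) - \Hb(\alpha)$, which is at most $1 - \Hb(\alpha \ast p)$ provided $f(t) = \Hb(t) - \Hb(t \ast p)$ is non-decreasing on $[0,1/2]$. The only step requiring genuine care is this monotonicity of $f$, and it holds by a direct derivative calculation: on $[0,1/2)$, $\alpha \ast p > \alpha$, so $\Hb'(\alpha \ast p) < \Hb'(\alpha)$, whence $f'(\alpha) = \Hb'(\alpha) - (1-2p)\Hb'(\alpha \ast p) > 0$. (Equivalently, Mrs.~Gerber's Lemma shows $h \mapsto \Hb(\Hb^{-1}(h) \ast p)$ is convex, and evaluating its derivative at $h = 1$ gives $(1-2p)^2 \le 1$, so its derivative is at most $1$ throughout.) With $f$ monotone, both branches of the case analysis conclude the converse.
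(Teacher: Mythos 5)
Your proof is correct, and the achievability part coincides with the paper's first proof up to relabeling (you take $X$ uniform and set $U=X\oplus Q$ with $Q\sim\mathrm{Bern}(\alpha)$, while the paper takes $U$ uniform and generates $X$ by passing $U$ through a $\mathrm{BSC}(\alpha)$; the resulting joint law of $(X,U,S)$ is identical). The converse, however, takes a genuinely different route. The paper never touches the single-letter region: its first proof relaxes the problem by revealing the state to both receivers, reducing the channel to the state-free BSC broadcast channel \eqref{eq:example_no_states} and quoting its known capacity region from Cover--Thomas, and its second proof uses the ``state revealed to the deterministic receiver'' clause of Theorem~\ref{thm:main} to make the same reduction. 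You instead specialize \eqref{eq:main} directly: you rewrite the sum-rate bound via $I(U;S,Y)=I(U;S)+I(U;Y|S)$ to expose $H(Y|U,S)$, invoke a conditional Mrs.~Gerber step to lower-bound $H(Z|U,S)$, and then close the gap with a two-case argument resting on the monotonicity of $t\mapsto \Hb(t)-\Hb(t\ast p)$ on $[0,1/2]$ (itself a consequence of Mrs.~Gerber). Your route is longer and essentially re-derives the classical BSC broadcast converse rather than citing it, but it is self-contained and shows explicitly how \eqref{eq:main} collapses to the rectangle union; the paper's route is shorter and also highlights the conceptual content of Theorem~\ref{thm:main} (state at the deterministic receiver is free). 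One minor caveat: your first derivation of $f'(\alpha)>0$ tacitly assumes $p\le 1/2$ so that $\alpha\ast p\in[\alpha,1/2]$ and $1-2p\ge 0$; this is harmless since one may relabel $Z$, and in any case your parenthetical argument via $g'(1)=(1-2p)^2\le 1$ and convexity of $g(h)=\Hb(\Hb^{-1}(h)\ast p)$ covers all $p$.
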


The capacity regions of the channel \eqref{eq:example} when
$\sigma=0.5$ and $p=0.2$ with noncausal and with causal
state-information are depicted in Figure~\ref{fig:example}.

    \begin{figure}[tbp]
        \centering
        \psfrag{Ry}{\scriptsize $R_y$}
        \psfrag{Rz}{\scriptsize $R_z$}
        \psfrag{noncausal}{\scriptsize Noncausal} 
        \psfrag{causal}{\scriptsize Causal}
        \includegraphics[width=0.5\textwidth]{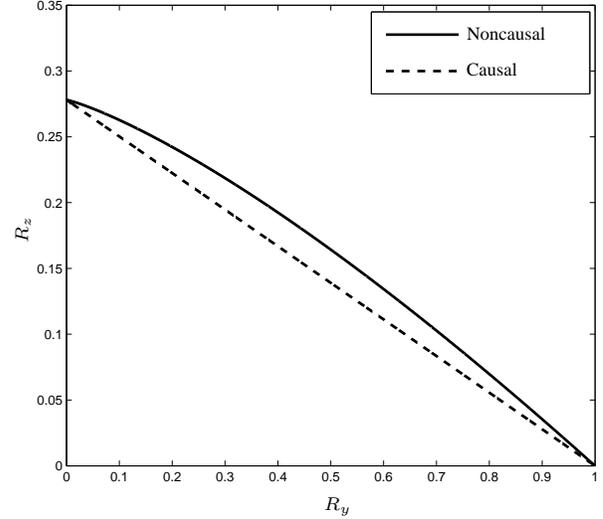}
        \caption{The capacity regions of the channel
          \eqref{eq:example} when $\sigma=0.5$ and $p=0.2$ with
          noncausal (solid line) and with
          causal (dashed line)  
          state-information at the transmitter. 
        } \label{fig:example}   
    \end{figure}

    We present two different proofs for
    Proposition~\ref{prop:example}: the first is based on the
    achievability part of Theorem~\ref{thm:main}; the second is based
    on the fact that revealing the states to the deterministic
    receiver does not increase the capacity region.

\begin{IEEEproof}[First proof of Proposition~\ref{prop:example}]
  We let $U$ be a uniform binary random variable that is
  independent of $S$, and let $X$ be the outcome of feeding $U$
  into a binary symmetric channel of crossover probability $\alpha$
  (independently of $S$). Note that now the channel from $U$ to~$Z$ is
  a binary symmetric channel with crossover probability $\beta$ as
  defined in \eqref{eq:beta}. Using Theorem~\ref{thm:main} we obtain that
  the capacity region contains all rate-pairs $(R_y,R_z)$ satisfying
  \begin{IEEEeqnarray}{rCl}
    R_y & < & H(Y|S) = 1 \label{eq:ex11}\\
    R_z & < & I(U;Z) - I(U;S) \\
    & = & \bigl(1 - \Hb(\beta)\bigr) - 0 \\
    & = & 1 - \Hb(\beta) \label{eq:ex12} \\
    R_y+R_z & < & H(Y|S)+I(U;Z)-I(U;S,Y) \\
    & = & 1 + \bigl(1 - \Hb(\beta)\bigr) - I(U;X) \label{eq:ex14} \\
    & = & 1 + \bigl(1 - \Hb(\beta)\bigr) - \bigl(1 -
    \Hb(\alpha)\bigr) \\
    & = & 1 - \Hb(\beta) + \Hb(\alpha) \label{eq:ex13}
  \end{IEEEeqnarray}
  where \eqref{eq:ex14} follows because $X$ can be computed from $S$
  and~$Y$, and because, given $X$, $U$ is independent of $(S,Y)$.
  Taking the convex
  closure of \eqref{eq:ex11}, \eqref{eq:ex12}, and \eqref{eq:ex13} over
  $\alpha\in[0,1]$, we obtain the 
  region characterized by \eqref{eq:example_region}. 
  
  To see that one cannot do better than \eqref{eq:example_region}, we
  observe that the capacity region of the channel \eqref{eq:example}
  with states known noncausally to the transmitter must be contained
  in the capacity region when the states are also known to both
  receivers. The latter case, however, is equivalent to the following
  broadcast channel without states:
  \begin{subequations}\label{eq:example_no_states}
  \begin{IEEEeqnarray}{rCl}
    y & = & x\\
    W(Z=x|x) & = & 1-W(Z=x\oplus 1|x) = 1-p.
  \end{IEEEeqnarray}
  \end{subequations}
  The capacity region of \eqref{eq:example_no_states} can be found in
  \cite[Example 15.6.5]{coverthomas91} and is the same as the region
  characterized by \eqref{eq:example_region}.
\end{IEEEproof}

\begin{IEEEproof}[Second proof of Proposition~\ref{prop:example}]
  By Theorem~\ref{thm:main}, the capacity region of the channel
  \eqref{eq:example} with states known noncausally to the transmitter
  is unchanged if the states are also revealed to the deterministic
  receiver. When $S$ is revealed to the deterministic receiver, it can
  form $Y \oplus S$ and thus recover $x$. This reduces the channel to
  the one without states \eqref{eq:example_no_states}. Hence the
  capacity region of interest is the same as the capacity region of
  \eqref{eq:example_no_states}, which is given by the union over
  $\alpha\in[0,1]$ of rate-pairs satisfying \eqref{eq:example_region}
  \cite[Example 15.6.5]{coverthomas91}.
\end{IEEEproof}

\section{A General Outer Bound}\label{sec:general}
We next generalize our converse of Section~\ref{sec:converse} to
a broadcast channel that is not necessarily semideterministic. Such a
channel is described by the transition law and the state law
\begin{subequations}\label{eq:general}
\begin{IEEEeqnarray}{rCl}
    \Pr [Y=y,Z=z|X=x,S=s] & = & W(y,z|x,s) \IEEEeqnarraynumspace \\
    \Pr [S=s] & = & P_S(s).
\end{IEEEeqnarray}
\end{subequations}
We let the state sequence $\vect{S}$ be known noncausally to the
transmitter and also known to the receiver which observes~$Y$.
The capacity region is defined in the same way as for the
semideterministic broadcast channel. In particular, we consider only
two private messages.

Applying the techniques of Section~\ref{sec:converse}, we obtain the
following outer bound on the capacity region of the
channel~\eqref{eq:general}. (The bound is tight for semideterministic
channels.)
\begin{proposition}\label{prp:outer}
  The capacity region of the channel \eqref{eq:general}, with the
  state sequence being revealed noncausally to the transmitter and
  also revealed to the receiver which observes~$Y$, is contained in
  the convex closure of rate-pairs satisfying
  \begin{subequations}
  \begin{IEEEeqnarray}{rCl}
    R_y & < & I(X;Y|S) \\
    R_z & < & I(U;Z) - I(U;S)\\
    R_y+R_z & < & I(X;Y|S)+I(U;Z)-I(U;S,Y)
  \end{IEEEeqnarray}
  \end{subequations}
  for joint distributions of the form
  \begin{equation}
    P_{XYZSU}(x,y,z,s,u) = P_S(s) \, P_{XU|S}(x,u|s) \, W(y,z|x,s).
  \end{equation}
\end{proposition}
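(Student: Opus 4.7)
The plan is to repeat the converse argument of Section~\ref{sec:converse} essentially verbatim, making only the modifications forced by the fact that $Y$ is no longer a deterministic function of $(X,S)$. The single change is that wherever the semideterministic proof produced the term $H(Y_i\mid S_i)$ (via the identity $H(Y_i\mid M_y,M_z,S^n,Y_{i+1}^n)=0$), the general channel produces $I(X_i;Y_i\mid S_i)$ instead.

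First I would bound $R_y$. Starting from Fano's Inequality and the independence of $M_y$ from $(M_z,S^n)$, I would write
\begin{IEEEeqnarray*}{rCl}
nR_y &\le& I(M_y;Y^n\mid S^n,M_z)+n\epsilon_n \\
&=& H(Y^n\mid S^n,M_z)-H(Y^n\mid S^n,M_y,M_z)+n\epsilon_n.
\end{IEEEeqnarray*}
Since $X^n$ is a function of $(M_y,M_z,S^n)$ and the channel is memoryless, $H(Y^n\mid S^n,M_y,M_z)=\sum_i H(Y_i\mid X_i,S_i)$, while $H(Y^n\mid S^n,M_z)\le\sum_i H(Y_i\mid S_i)$, giving $nR_y\le\sum_i I(X_i;Y_i\mid S_i)+n\epsilon_n$.

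Next I would bound $R_z$ by the \emph{exact} Gel'fand--Pinsker chain of Section~\ref{sec:converse}, since that derivation never uses any property of $Y$. Defining $V_i\triangleq(M_z,Z^{i-1},S_{i+1}^n)$, the same manipulations (Csisz\'ar's Identity, dropping negative terms, independence of $S_i$ from $(M_z,S_{i+1}^n)$) yield $nR_z\le\sum_i\bigl[I(V_i;Z_i)-I(V_i;S_i)\bigr]+n\epsilon_n$.

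For the sum rate I would follow \eqref{eq:R12_y}--\eqref{eq:R12_36} line by line; none of those steps used the semideterministic property. With $T_i\triangleq Y_{i+1}^n$ this delivers
\[
n(R_y+R_z)\le\sum_i\!\bigl[I(V_i,T_i;Z_i)-I(V_i,T_i;S_i,Y_i)+I(W_i;Y_i\mid S_i)\bigr]+n\epsilon_n,
\]
where $W_i\triangleq(M_y,M_z,S^{i-1},S_{i+1}^n,Y_{i+1}^n)$. At this step the semideterministic proof used $H(Y_i\mid S_i,W_i)=0$ to get $H(Y_i\mid S_i)$; in the general case I would instead upper-bound $I(W_i;Y_i\mid S_i)$ by $I(X_i,W_i;Y_i\mid S_i)=I(X_i;Y_i\mid S_i)$, the equality following because memorylessness makes $Y_i$ conditionally independent of $W_i$ given $(X_i,S_i)$.

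Having obtained the three single-letter bounds with $H(Y_i\mid S_i)$ replaced by $I(X_i;Y_i\mid S_i)$, I would introduce a time-sharing variable $Q$ uniform on $\{1,\ldots,n\}$, set $V=V_Q$, $T=T_Q$, $X=X_Q$, $Y=Y_Q$, $Z=Z_Q$, $S=S_Q$, and absorb $Q$ into $V$ and $T$. The resulting distribution satisfies $(V,T)\markov(X,S)\markov(Y,Z)$ by memorylessness, so its marginal on $(X,Y,Z,S)$ has the required form. Finally, to eliminate the auxiliary $T$, I would invoke the identical dichotomy used at the end of Section~\ref{sec:converse}: since $I(T;Y\mid S,V)\ge 0$, at least one of the inequalities $I(T;Z\mid V)-I(T;S,Y\mid V)\le 0$ or $I(T;Z\mid V)-I(T;S\mid V)\ge 0$ holds, so setting $U=V$ in the former case and $U=(V,T)$ in the latter yields the claimed bounds with the single auxiliary $U$.

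The main obstacle, if any, is purely bookkeeping: verifying that the ``replace $H(Y|S)$ by $I(X;Y|S)$'' substitution is the \emph{only} change needed, and in particular that the final $V$-versus-$(V,T)$ combination step, which is the most delicate part of the semideterministic proof, goes through unchanged because it manipulates only the $Z$- and $S$-related mutual informations, which are unaffected by the generality of the channel.
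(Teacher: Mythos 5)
Your proposal is correct and follows essentially the same route as the paper's proof: the $R_z$ bound and the sum-rate chain are reused verbatim, the only substantive change is replacing $H(Y_i\mid S_i)$ by $I(X_i;Y_i\mid S_i)$ via the memorylessness/conditional-independence argument, and the final $V$-versus-$(V,T)$ dichotomy is invoked unchanged. The one cosmetic difference is in the $R_y$ bound, where you condition on $M_z$ and work with block entropies while the paper continues directly from its single-letter decomposition $\sum_i I(M_y;Y_i\mid Y^{i-1},S^n)$ by inserting $X_i$; both reach $\sum_i I(X_i;Y_i\mid S_i)$ by the same observation.
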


\begin{proof}
  To bound $R_y$, we note that \eqref{eq:R1_3} holds also for the
  general broadcast channel \eqref{eq:general}, and we
  continue \eqref{eq:R1_3} as follows:
  \begin{IEEEeqnarray}{rCl}
    n R_y & \le & \sum_{i=1}^n I(M_y;Y_i|Y^{i-1},S^n) + n\epsilon_n\\
    & \le & \sum_{i=1}^n I(M_y, X_i ;Y_i|Y^{i-1},S^n) + n\epsilon_n\\
    & = & \sum_{i=1}^n H(Y_i|Y^{i-1},S^n) -H(Y_i|X_i,S_i) +
    n\epsilon_n \label{eq:R1_111} \IEEEeqnarraynumspace \\
    & \le & \sum_{i=1}^n I(X_i;Y_i|S_i) + n\epsilon_n. \label{eq:Ry_x}
  \end{IEEEeqnarray}
  Here \eqref{eq:R1_111} follows because, given $(X_i,S_i)$, the
  channel output~$Y_i$ is independent of
  $(M_y,Y^{i-1},S^{i-1},S_{i+1}^n)$. 

  We bound $R_z$ exactly as \eqref{eq:R2_7} with $V_i$,
  $i\in\{1,\ldots,n\}$, defined as in \eqref{eq:defV}.
  
  To bound the sum-rate $R_y+R_z$, note that \eqref{eq:R12_y},
  \eqref{eq:R12_z5}, and \eqref{eq:R12_36} still hold, but
  \eqref{eq:R12_37} should be replaced by
  \begin{equation}
    \sum_{i=1}^n I \left( \left. M_y,M_z,S^{i-1},S_{i+1}^n,
      Y_{i+1}^n; Y_i \right| S_i \right) = \sum_{i=1}^n
  I(X_i;Y_i|S_i),
  \end{equation}
  which is true because $(M_y,M_z,S^n)$ determines $X_i$, and because,
  without feedback,
  given $(X_i,S_i)$, the output $Y_i$ is independent of
  $(M_y,M_z,S^{i-1},S_{i+1}^n,Y_{i+1}^n)$. These together yield
  \begin{IEEEeqnarray}{rCl}
  n(R_y+R_z) & \le &  \sum_{i=1}^n I (V_i,T_i; Z_i)-\sum_{i=1}^n
  I(V_i,T_i;S_i,Y_i ) \nonumber \\    &&{}+
  \sum_{i=1}^n I(X_i;Y_i|S_i) + n \epsilon_n, \label{eq:Ryz_x}
  \end{IEEEeqnarray}
  where $T_i$, $i\in\{1,\ldots,n\}$, are defined in \eqref{eq:defT}.

  Summarizing \eqref{eq:Ry_x}, \eqref{eq:R2_7}, and \eqref{eq:Ryz_x} we
  conclude that the desired capacity region is contained in the convex
  closure of rate-pairs $(R_y,R_z)$ satisfying
  \begin{subequations}
  \begin{IEEEeqnarray}{rCl}
    R_y & < & I(X;Y|S)\\
    R_z & < & I(V;Z)-I(V;S)\\
    R_y+R_z & < & I(X;Y|S)+I(V,T;Z) - I(V,T;S,Y) \IEEEeqnarraynumspace
  \end{IEEEeqnarray}
  \end{subequations}
  where, given $(X,S)$, the outputs $(Y,Z)$ are drawn according to the
channel law \eqref{eq:general} independently of the auxiliary 
random variables $(V,T)$. Now to prove Proposition~\ref{prp:outer}
it remains to find a single auxiliary 
random variable $U$ satisfying
\begin{subequations}\label{eq:strange_general}
\begin{equation}
  I(V;Z) - I(V;S)  \le  I(U;Z) - I(U;S)
\end{equation}
and
\begin{IEEEeqnarray}{rCl}
  \lefteqn{I(X;Y|S) + I(V,T;Z) - I(V,T;S,Y)}~~~~~~~~~~~~~~~~
  \nonumber\\ 
  & \le & I(X;Y|S) + I(U;Z) - I(U;S,Y) \IEEEeqnarraynumspace
\end{IEEEeqnarray}
\end{subequations}
to replace both $V$ and $T$. Now note that \eqref{eq:strange_general}
is equivalent to~\eqref{eq:strange}. Hence, according to our arguments in
Section~\ref{sec:converse}, such a $U$ can always be found.
\end{proof}
 
\appendices

\section{Proof of Proposition~\ref{prp:cardinality}}\label{app:cardinality}
It suffices to show that, given any joint distribution $P_{XYZSU}$ of the form
\eqref{eq:distribution}, there exists another
distribution $\tilde{P}_{XYZSU}$ of the same form
  \begin{IEEEeqnarray}{rCl}
    \lefteqn{\tilde{P}_{XYZSU}(x,y, z,s,u)}~~~~~~~\nonumber \\ 
    & = & P_{S}(s) \, \tilde{P}_{XU|S}(x,u|s) \, \mathbf{1}\bigl\{y =
      f(x,s)\bigr\} \, W(z|x,s)\label{eq:sameform}
      \IEEEeqnarraynumspace 
  \end{IEEEeqnarray}
satisfying
\begin{equation}\label{eq:cardsmall}
  \left| \left\{ u\colon \tilde{P}_U(u) > 0 \right\} \right| \le
  |\set{X}| \cdot |\set{S}| + 1,
\end{equation}
where $\tilde{P}_U$ denotes the marginal of $\tilde{P}_{XYZSU}$ on
$U$, and
\begin{subequations}\label{eq:samebound}
\begin{IEEEeqnarray}{rCl}
  H(Y|S) \big\vert_{P} & = & H(Y|S)
  \big\vert_{\tilde{P}} \label{eq:samebound1} \\
  I(U;Z)-I(U;S)\big\vert_{P} & = &
  I(U;Z)-I(U;S)\big\vert_{\tilde{P}}   \IEEEeqnarraynumspace 
  \label{eq:samebound2} \\
  \lefteqn{H(Y|S)+I(U;Z)-I(U;S,Y)\big\vert_{P}}~~~~~~~ ~~~~~~~ ~~~~~~~
  ~~~~~~~ \nonumber\\
  \lefteqn{ = H(Y|S)+I(U;Z) -I(U;S,Y)\big\vert_{\tilde{P}}.}
  ~~~~~~~ 
  \IEEEeqnarraynumspace \label{eq:samebound3}
\end{IEEEeqnarray}
\end{subequations}

To this end, consider the following $|\set{X}|\cdot|\set{S}|+1$
functions of~$u$, all of which are determined by the conditional
distribution $P_{XYZS|U}$ and are independent of the marginal $P_U$:
\begin{subequations}
\begin{IEEEeqnarray}{rCl}
  h_0 (u) & \triangleq & H(S|U=u) - H(Z|U=u)\\
  h_1 (u) & \triangleq & H(Y,S|U=u) - H(Z|U=u)\\
  h_{x,s}(u) & \triangleq & P_{XS|U}(x,s|u),\nonumber \\
  & & ~~~~~~~x\in\set{X}, s\in\set{S}, (x,s)\neq (1,1).
\end{IEEEeqnarray}
\end{subequations}
We now look for a $\tilde{P}_U$ (which will replace $P_U$) such that
\begin{subequations}
\begin{IEEEeqnarray}{rCl}
  \sum_{u\in\set{U}} \tilde{P}_U(u) h_0 (u) & = &
  H(S|U)-H(Z|U)\big\vert_{P} \label{eq:constraint1}\\
  \sum_{u\in\set{U}} \tilde{P}_U(u) h_1 (u) & = &
  H(Y,S|U)-H(Z|U)\big\vert_{P} \label{eq:constraint2} \\
  \sum_{u\in\set{U}} \tilde{P}_U(u) h_{x,s} (u) & = &
  P_{XS}(x,s),\quad  \nonumber\\
  & & ~~~x\in\set{X},s\in\set{S}, (x,s)\neq
  (1,1). \IEEEeqnarraynumspace
  \label{eq:constraint3}
\end{IEEEeqnarray}
\end{subequations}
By the Support Lemma \cite[p.631]{elgamalkim11}, such a $\tilde{P}_U$
can be found whose support-size is at most the total number of constraints,
which equals $|\set{X}|\cdot|\set{S}|+1$. Choosing
\begin{equation}
  \tilde{P}_{XYZSU}(x,y,z,s,u) \triangleq \tilde{P}_U (u) \,
  P_{XYZS|U}(x,y,z,s|u) 
\end{equation}
for all $(x,y,z,s,u)$ yields a joint distribution that
satisfies~\eqref{eq:cardsmall}. We next show  
that this choice also satisfies \eqref{eq:sameform} and
\eqref{eq:samebound}. First note that \eqref{eq:constraint3} implies
that $\tilde{P}_{XYZUS}$ has the same marginal on $(X,S)$ as
$P_{XYZUS}$. In particular, 
\begin{equation}
  \tilde{P}_S(s) = P_S(s),\quad s\in\set{S}.
\end{equation}
This combined with the fact that we used the conditional distribution
$P_{XYZS|U}$ to generate $\tilde{P}_{XYZSU}$ shows that
$\tilde{P}_{XYZSU}$ is indeed of the form
\eqref{eq:sameform}. Furthermore, these imply that
\begin{equation}
  \tilde{P}_{XYZS} (x,y,z,s) = P_{XYZS}(x,y,z,s)
\end{equation}
for all $(x,y,z,s)$. Hence we have
\begin{subequations}\label{eq:yes1}
\begin{IEEEeqnarray}{rCl}
  H(Y|S) \big\vert_{\tilde{P}} & = & H(Y|S)\big\vert_{P}\\
  H(Z) - H(S) \big\vert_{\tilde{P}} & = & H(Z)-H(S) \big\vert_{P} \\
  H(Z) - H(Y,S) \big\vert_{\tilde{P}} & = & H(Z) - H(Y,S)
  \big\vert_{P}. \IEEEeqnarraynumspace
\end{IEEEeqnarray}
\end{subequations}
On the other hand, \eqref{eq:constraint1} and \eqref{eq:constraint2}
imply
\begin{subequations}\label{eq:yes2}
\begin{IEEEeqnarray}{rCl}
  H(S|U)-H(Z|U)\big\vert_{\tilde{P}} & = &
  H(S|U)-H(Z|U)\big\vert_{P}\\
  H(Y,S|U)-H(Z|U)\big\vert_{\tilde{P}} & = &
  H(Y,S|U)-H(Z|U)\big\vert_{P}. \IEEEeqnarraynumspace
\end{IEEEeqnarray}
\end{subequations}
Combining \eqref{eq:yes1} and \eqref{eq:yes2} yields
\eqref{eq:samebound} and concludes the proof.
  
\section{Proof of Proposition~\ref{prp:causal}} \label{app:example}


To prove Proposition~\ref{prp:causal}, we need the following simple
outer bound on the capacity region of any broadcast channel with
causal state-information:
\begin{lemma}\label{lem:causal}
  The capacity region of any state-dependent two-receiver broadcast
  channel as in \eqref{eq:general} with causal state-information at
  the transmitter is contained in the convex closure of the union of
  the rate pairs satisfying
  \begin{subequations}\label{eq:causal}
  \begin{IEEEeqnarray}{rCl}
    R_y & < & I(T;Y)\\
    R_z & < & I(T;Z)
  \end{IEEEeqnarray}
  \end{subequations}
  where the union is over all joint distributions of the form
  \begin{IEEEeqnarray}{rCl}
    \lefteqn{P_{XYZST}(x,y,z,s,t)}~~~~~~~~~~~~~\nonumber \\
    & = & P_S(s) \, P_T(t) \, \mathbf{1}\{x=g(t,s)\} \, W(y,z|x,s). \IEEEeqnarraynumspace
  \end{IEEEeqnarray}
  \end{lemma}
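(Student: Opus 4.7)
The plan is a Shannon-strategy converse. First, by Fano's inequality and the independence of the two messages, $nR_y\le I(M_y;Y^n)+n\epsilon_n\le I(M_y,M_z;Y^n)+n\epsilon_n$ and analogously $nR_z\le I(M_y,M_z;Z^n)+n\epsilon_n$. The chain rule then expresses these as $\sum_{i=1}^n I(M_y,M_z;Y_i|Y^{i-1})$ and $\sum_{i=1}^n I(M_y,M_z;Z_i|Z^{i-1})$.

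Next I would introduce the Shannon strategy at time~$i$: because the state information is causal, $X_i$ is a deterministic function of $(M_y,M_z,S^i)$, so for each realization of $(M_y,M_z,S^{i-1})$ the map $s\mapsto X_i$ is a function $\set{S}\to\set{X}$. Let $T_i$ denote this (random) function; then $T_i$ is a deterministic function of $(M_y,M_z,S^{i-1})$, and $X_i=g(T_i,S_i)$ for $g(t,s)\triangleq t(s)$. Since $S_i$ is IID and independent of $(M_y,M_z,S^{i-1})$, it is independent of $T_i$; and since the channel is memoryless with no feedback, given $(X_i,S_i)=(g(T_i,S_i),S_i)$ the pair $(Y_i,Z_i)$ is drawn from $W(\cdot,\cdot|X_i,S_i)$ independently of the past. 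Hence the joint law of $(T_i,S_i,X_i,Y_i,Z_i)$ is exactly of the form stated in the lemma.

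The core step is the single-letter bound $I(M_y,M_z;Y_i|Y^{i-1})\le I(T_i;Y_i)$ (and the analogous one for $Z_i$). I would prove it via the Markov chain $(M_y,M_z,Y^{i-1},S^{i-1})\markov T_i\markov Y_i$: given $T_i$, the output $Y_i$ depends only on $S_i$ and on fresh channel randomness at time~$i$, both of which are independent of everything that occurred before time~$i$, whereas $Y^{i-1}$ is a function of $(M_y,M_z,S^{i-1})$ and past channel randomness. Combining the Markov chain with the fact that $T_i$ is a function of $(M_y,M_z,S^{i-1})$ yields
\begin{equation*}
I(M_y,M_z;Y_i|Y^{i-1}) \le I(M_y,M_z,Y^{i-1},S^{i-1};Y_i) = I(T_i;Y_i),
\end{equation*}
and similarly for $Z$. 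Summing over $i$ gives $R_y\le \frac{1}{n}\sum_i I(T_i;Y_i)+\epsilon_n$ and $R_z\le \frac{1}{n}\sum_i I(T_i;Z_i)+\epsilon_n$.

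Finally, a time-sharing random variable $Q$ uniform on $\{1,\ldots,n\}$ and independent of everything else, absorbed into the auxiliary as $T\triangleq(T_Q,Q)$ with $S\triangleq S_Q$, $Y\triangleq Y_Q$, $Z\triangleq Z_Q$, $X\triangleq X_Q$, single-letterizes the bound. The resulting joint distribution is still of the form $P_S(s)\,P_T(t)\,\mathbf{1}\{x=g(t,s)\}\,W(y,z|x,s)$ after extending $g$ to the enlarged alphabet (the extra coordinate $Q$ is independent of $S$ and merely selects which Shannon strategy to apply), so we conclude $R_y\le I(T;Y)+\epsilon_n$ and $R_z\le I(T;Z)+\epsilon_n$. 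Letting $n\to\infty$ and taking the convex closure completes the argument. The main obstacle is the Markov step, which requires carefully unpacking the causality of the encoder and the memoryless, feedback-free nature of the channel.
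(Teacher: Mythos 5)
Your proof is correct and follows essentially the same outline as the paper's (Fano, chain rule, an auxiliary that is a function of $(M_y,M_z,S^{i-1})$, single-letterization), but with one genuine simplification. The paper sets $T_i \triangleq (M_y,M_z,S^{i-1})$, derives $nR_y \le \sum_i I(M_y,M_z,S^{i-1};Y_i) + n\epsilon_n$ through a sequence of mutual-information manipulations exploiting that $X^{i-1}$ is a function of $(M_y,M_z,S^{i-1})$ and that $Y_i \indep Y^{i-1}$ given $(X^{i-1},S^{i-1},M_y,M_z)$, and then needs a separate step to argue that the union may be restricted to distributions where $X$ is a deterministic function of $(T,S)$. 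You instead define $T_i$ to be the Shannon strategy at time $i$, i.e.\ the map $s \mapsto X_i$ induced by $(M_y,M_z,S^{i-1})$. This choice has two advantages: it makes $X_i = g(T_i,S_i)$ deterministic by construction, so no further reduction step is needed; and it gives the clean Markov chain $(M_y,M_z,Y^{i-1},S^{i-1}) \markov T_i \markov Y_i$ (since given the strategy, $Y_i$ is driven only by the fresh $S_i$ and fresh channel noise), which collapses $I(M_y,M_z,Y^{i-1},S^{i-1};Y_i)$ to $I(T_i;Y_i)$ in one stroke. Your verification of $T_i \indep S_i$ and of the required joint-law structure is correct, and the time-sharing step (absorbing $Q$ into the strategy alphabet) is standard and preserves the stated form. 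The two arguments are equivalent in content but yours is somewhat more economical; the paper's version chains the raw information identities more explicitly, which mirrors the style of its other converses.
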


\begin{proof}
  We bound $R_y$ as for single-user channels with causal
  state-information \cite{shannon58,elgamalkim11} as follows:
  \begin{IEEEeqnarray}{rCl}
    nR_y & \le & I(M_y; Y^n) + n\epsilon_n\\
    & \le & I(M_y,M_z; Y^n) + n\epsilon_n\\
    & = & \sum_{i=1}^n I(M_y,M_z;Y_i|Y^{i-1}) + n\epsilon_n\\
    & \le & \sum_{i=1}^n I(M_y,M_z,Y^{i-1};Y_i)  + n\epsilon_n\\
    & \le & \sum_{i=1}^n I(M_y,M_z,S^{i-1},Y^{i-1};Y_i)  +
    n\epsilon_n\\
    & = & \sum_{i=1}^n I(M_y,M_z,S^{i-1},X^{i-1},Y^{i-1};Y_i)  +
    n\epsilon_n \IEEEeqnarraynumspace \label{eq:causal1}\\
    & = & \sum_{i=1}^n I(M_y,M_z,S^{i-1}, X^{i-1};Y_i)  +
    n\epsilon_n \label{eq:causal2} \\
    & = & \sum_{i=1}^n I(M_y,M_z,S^{i-1};Y_i)  +
    n\epsilon_n. \label{eq:causal3} 
  \end{IEEEeqnarray}
  Here, \eqref{eq:causal1} and \eqref{eq:causal3} follow because
  $X^{i-1}$ is a function of $(M_y,M_z,S^{i-1})$; and
  \eqref{eq:causal2} because, given $(M_y,M_z,S^{i-1},X^{i-1})$, the
  output $Y_i$ is independent of $Y^{i-1}$. In the same way we can
  obtain
  \begin{equation}
    nR_z \le \sum_{i=1}^n I(M_y,M_z,S^{i-1};Z_i).\label{eq:causal4}
  \end{equation}
  We define
  \begin{equation}
    T_i \triangleq (M_y,M_z,S^{i-1}),\quad i \in\{1,\ldots,n\}
  \end{equation}
  which clearly satisfy the conditions
  \begin{equation}
    T_i\indep S_i,\quad T_i\markov (X_i,S_i) \markov (Y_i,Z_i),\quad
    i\in \{1,\ldots,n\}. 
  \end{equation}
  We now have
  \begin{subequations}
  \begin{IEEEeqnarray}{rCl}
    nR_y & \le & \sum_{i=1}^n I(T_i;Y_i) + n\epsilon_n\\
    nR_z & \le & \sum_{i=1}^n I(T_i;Z_i) + n\epsilon_n,
  \end{IEEEeqnarray}
  \end{subequations}
  which imply that the capacity region of interest is contained in
  the convex closure of \eqref{eq:causal} for distributions on
  $(X,Y,Z,S,T)$ satisfying
  \begin{equation}
    T\indep S,\quad T \markov (X,S) \markov (Y,Z).
  \end{equation}
  It now only remains to show that, to exhaust this region,
  it suffices to consider joint distributions in which $X$ is a function
  of $(T,S)$. This is indeed the case because, given $P_{TS}(t,s)$ and
  the channel law, both terms on the RHS of \eqref{eq:causal} are
  convex in~$P_{X|TS}$.
\end{proof}

  We next proceed to prove Proposition~\ref{prp:causal}. We begin with
  the achievability part, which is 
  straightforward. If the transmitter only communicates to the
  receiver which observes $Y$, then it can cancel the interference of
  $S$ by flipping the input symbol whenever $S=1$. In this way the
  rate-pair
  \begin{equation}\label{eq:share1}
    (R_y,R_z) = (1,0)
  \end{equation}
  can be achieved. On the other hand, if the transmitter only
  communicates to the receiver which observes $Z$, then it can ignore
  $S$ and achieve the rate-pair
  \begin{equation}\label{eq:share2}
    (R_y,R_z) = (0, 1-\Hb(p)).
  \end{equation}
  Time-sharing between \eqref{eq:share1} and \eqref{eq:share2}
  achieves the claimed capacity region.

  To prove the converse part, we use Lemma~\ref{lem:causal}. Note that
  the auxiliary random variable $T$ in Lemma~\ref{lem:causal} can be
  restricted to take value in all ``input strategies''
  \cite{shannon58}. Namely, its alphabet is the set of all mappings
  from $\set{S}$ to $\set{X}$. There are four such mappings:
  \begin{subequations}
  \begin{IEEEeqnarray}{rCl}
    T & = & 0\colon \quad \textnormal{maps $0$ to $0$ and $1$ to $0$}\\
    T & = & 1\colon \quad \textnormal{maps $0$ to $1$ and $1$ to $1$}\\
    T & = & 2\colon \quad \textnormal{maps $0$ to $0$ and $1$ to $1$}\\
    T & = & 3\colon \quad \textnormal{maps $0$ to $1$ and $1$ to $0$}.
  \end{IEEEeqnarray}
  \end{subequations}
  Here, $T=0$~or~$1$ means sending a fixed $x$ independently of $S$,
  and $T=2$~or~$3$ means flipping $x$ whenever $S=1$. Using the
  ``fixed'' strategies $T=0$ or $1$ one can transmit 
  information to 
  the receiver which observes $Z$ but not to the receiver which
  observes $Y$:
  \begin{IEEEeqnarray}{rCl}
    H(Y|T=0) & = & H(Y|T=1) = 1\\
    H(Z|T=0) & = & H(Z|T=1) = 1-\Hb(p);
  \end{IEEEeqnarray}
  while using the ``flipped'' strategies $T=2$ or $3$ one can transmit
  information 
  to the receiver which observes $Y$ but not to the receiver which
  observes $Z$:
  \begin{IEEEeqnarray}{rCl}
    H(Y|T=2) & = & H(Y|T=3) = 0\\
    H(Z|T=2) & = & H(Z|T=3) = 1.
  \end{IEEEeqnarray}
  We now have
  \begin{IEEEeqnarray}{rCl}
    R_y & \le & I(T;Y)\\
    & = & H(Y) - H(Y|T)\\
    & = & H(Y) - P_T(0) H(Y|T=0) \nonumber\\
    & & {} - P_T(1) H(Y|T=1)\\
    & \le & 1 - \Pr \bigl[ T\in\{0,1\} \bigr]\cdot 1\\
    & = & \Pr \bigl[ T\in\{2,3\} \bigr]\\
    R_z & \le & I(T;Z)\\
    & = & H(Z) - H(Z|T)\\
    & = & H(Z) - P_T(0) H(Z|T=0) - P_T(1) H(Z|T=1) \nonumber\\
    & & {} - P_T(2)H(Z|T=2) -
    P_T(3)H(Z|T=3) \\
    & \le & 1 - \Pr \bigl[ T\in\{0,1\} \bigr]\cdot (1-\Hb(p))
    \nonumber\\
    & & {} - \Pr
    \bigl[ T\in\{2,3\} \bigr]\cdot 1\\
    & = & \left(1- \Pr \bigl[ T\in\{2,3\} \bigr] \right) \cdot
    (1-\Hb(p)). 
  \end{IEEEeqnarray}
  Denoting 
  \begin{equation}
    \lambda \triangleq \Pr \bigl[ T\in\{2,3\} \bigr]
  \end{equation}
  we see that $(R_y,R_z)$ indeed must satisfy \eqref{eq:causal}.
  This ends our proof of Proposition~\ref{prp:causal}.

\section*{Acknowledgments}

The authors thank the anonymous reviewers of both the conference and
the journal versions of this paper, as well as the Associate Editor
Yossef Steinberg for their helpful comments. 


\bibliographystyle{IEEEtran}           

\end{document}